\documentclass[12pt,letterpaper]{amsart}
\usepackage{mathabx}
\usepackage{graphicx}

\usepackage{float}
\usepackage[T1]{fontenc}
\usepackage{mathtools}
\usepackage{epic,eepic,latexsym, amssymb, amscd, amsfonts}
\usepackage[all]{xy}
\usepackage{url}
\usepackage{hyperref}
\usepackage{enumerate}

 \newlength{\baseunit}               
 \newcount{\numlines}                
 \setlength{\baseunit}{0.05ex}
  
\setlength{\oddsidemargin}{0cm} \setlength{\evensidemargin}{0cm}
\setlength{\marginparwidth}{0in}
\setlength{\marginparsep}{0in}
\setlength{\marginparpush}{0in}
\setlength{\topmargin}{0in}
\setlength{\headheight}{0pt}
\setlength{\headsep}{0pt}
\setlength{\footskip}{.3in}
\setlength{\textheight}{9.2in}
\setlength{\textwidth}{6.5in}
\setlength{\parskip}{4pt}

\makeatletter
\newcommand*{\rom}[1]{\expandafter\@slowromancap\romannumeral #1@}
\makeatother

\numberwithin{equation}{section}
\newtheorem{thm}{Theorem}[section]
\newtheorem{lem}[thm]{Lemma}
\newtheorem{defi}[thm]{Definition}
\newtheorem{prop}[thm]{Proposition}
\newtheorem{cor}[thm]{Corollary}
\newtheorem{rem}[thm]{Remark}
\newtheorem{ex}[thm]{Example}

\def\R{\mathcal R}
\def\A{\mathcal A}
\def\L{\mathcal L}
\def\P{\mathbb P}
\def\N{\mathbb N}
\def\O{\mathcal O}
\def\n{\mathfrak n}
\def\res{\textup{Res}}
\def\d{\textup{det}}
\def\T{\textup{tr}}
\def\D{\textup{D}}
\def\S{\textup{Sing}}
\def\H{\textup{H}}
\def\a{\alpha}
\def\b{\beta}
\def\t{\theta}

\def\N{\mathfrak N}
\def\X{\mathfrak X}

\def\w{\omega}
\def\dr{\textup{detR$\Gamma$}}
\hyphenation{iso-mo-no-dromy}
\hyphenation{com-pacti-fi-ca-tion}
\hyphenation{modi-fi-ca-tion}
\hyphenation{en-sembles}
\hyphenation{Pain-l\'eve}


\begin{document}
\pagestyle{plain}
\title{ { Moduli spaces of \MakeLowercase{$q$}-connections and gap probabilities}}
\author{Alisa Knizel\\}

\address{Department of Mathematics\\
  Massachusetts Institute of Technology\\
 Cambridge, MA 02139.}
\email{alisik@math.mit.edu.}

\begin{abstract}
We develop a $q$-analogue of methods introduced by Arinkin and Borodin in
\cite{BA1}, \cite{BA2}. Our goal is to show that the one-interval gap
probability for the $q$-Hahn orthogonal polynomial ensemble can be expressed through a solution
of the asymmetric $q$-Painlev\'e \rom{5} equation. The case of the $q$-Hahn ensemble we consider is the most general case of the orthogonal polynomial ensembles that have been studied. Our approach is based on the analysis of $q$-connections on $\P^1$ with a particular singularity structure. It requires a new derivation of a $q$-difference equation of Sakai's hierarchy \cite{Sak} of type $A_{2}^{(1)}.$ We also calculate its Lax pair. Following \cite{BA2}, we introduce the notion of the $\tau$-function of a
$q$-connection and its isomonodromy transformations. We show that the gap probability function of the $q$-Hahn ensemble can be viewed as the $\tau$-function for an associated $q$-connection and its isomonodromy transformations. 
\end{abstract}
\maketitle

\section{Introduction}

The connection between gap probabilities for orthogonal
polynomial ensembles and Painlev\'e equations was established in
the 90's. In continuous settings the gap
probability function was explicitly written in terms of a specific
solution of one of the six Painlev\'e equations for all classical weights.
It was done in \cite{TW} for the Hermite and
Laguerre weight, in \cite{TW} and \cite{HS} for the Jacobi weight, and
in \cite{W}, \cite{BD} for the quasi-Jacobi weight. 

The first
results for the discrete case were obtained in \cite{BB}.
In present paper we consider the $q$-Hahn orthogonal polynomial
ensemble. We present an explit expression of the gap
probability function for this ensemble in terms of a solution
of the $q$-asymmetric Painlev\'e \rom{5} equation. The $q$-Hahn ensemble is the
most general case of orthogonal polynomial ensembles associated to a
a family of polynomials in the Askey scheme that have been
studied in this context. We believe that the expressions for the gap
probabilities in terms of solutions of Painlev\'e equations in all
known cases should follow from our result through a degeneration
procedure. In the end of the paper we present a numerical evidence
supporting this claim. Considering a specific limit regime we show
that the distribution of the rightmost particle in the $q$-Hahn
ensemble numerically converges to the Tracy-Widom distribution.

We initially became interested in the problem of studying the $q$-Hahn ensemble
as it naturally comes up in the statistical description
of the tilings of
a hexagon by rhombi (see \cite{BGR}) as well as in the representation theory
(see \cite{KR}). Let us describe the former connection in more details.  
\subsection{Tilings of a hexagon and q-Hahn ensemble}
We start by recalling the definition \cite{KS} of $q$-Hahn polynomials.
\begin{defi}
Let $q\in(0,1)$ and $N\in \mathbb Z_{>0}$. Let $0<\a<q^{-1}$ and
$0<\b<q^{-1}$ or $\a>q^{-N}$ and $\b>q^{-N}.$ Define a weight function on
$X= \{0,\dots, N\}$ as
$$\omega(x)=(\a\b q)^{-x}\frac{(\a q,q^{-N};q)_x}{(q,\b^{-1}q^{-N};q )_x},$$ where $(y_1,
\dots, y_i;q)_k=(y_1;q)_k\cdots (y_i;q)_k$, and
$(y;q)_k=(1-y)(1-yq)\cdots (1-yq^{k-1})$ is the
$q$-Pochhammer symbol.
\end{defi}
\begin{defi} Let us fix an integer $k > 0.$ The $q$-Hahn orthogonal ensemble is a probability measure on the set of all $k$-subsets of
$\mathfrak X= \{y_i=q^{-i}: i=0,\dots, N\}$ given by 
$$\mathcal P(y_{x_1},\dots ,y_{x_k})=
\frac{1}{Z} \prod_{1\leq i<j \leq k} {(y_{x_i}-y_{x_j})^2}\cdot \prod_{i=1}^{k}{\omega(x_i)},$$
where $Z$ is the normalizing constant.
\end{defi}

The collection $(y_{x_1},\dots ,y_{x_k})$ is often referred to as
$k$-particle configuration.
\begin{figure}[h]
\includegraphics[width=0.33\linewidth]{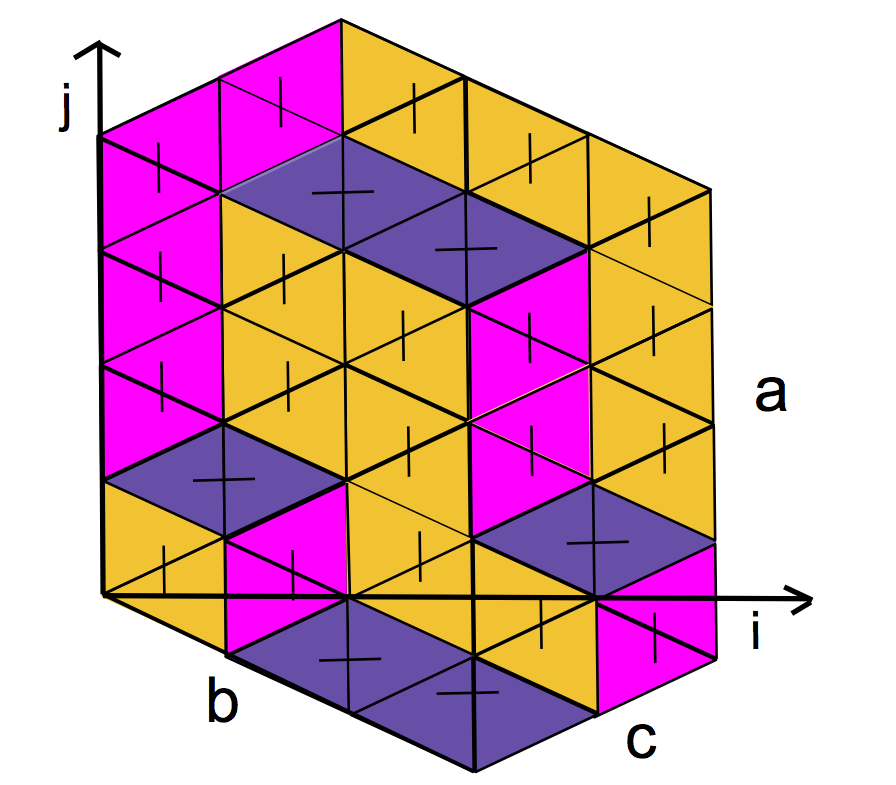}
\caption{Tiling of a $4\times 3\times 2$ hexagon.}
  \label{fig:tile}
\end{figure}

The $q$-Hahn orthogonal ensemble is closely related to the following
tiling model. For any integers $a,$ $b,$ $c\geq 1$ consider a hexagon
drawn on a regular triangular lattice (see Figure \ref{fig:tile}). The tilings
of the hexagon by rhombi are obtained by gluing two neighboring
elementary triangles together (such rhombi are called lozenges). In
Figure \ref{fig:tile} one can also view rhombi as faces of
unit cubes and see the three-dimensional shape
corresponding to a tiling. It is called a 3-D Young diagram or,
equivalently, boxed plane partition.

Consider the set of all tilings
of a hexagon by rhombi, which we denote by $\Omega(a,b,c).$ Let us
denote the purple lozenges by $\Diamond$ and  introduce coordinate
axes $(i, j)$ shown in
Figure \ref{fig:tile}. Define a
probability measure on $\Omega(a,b,c)$
as
\begin{equation}
 P(\mathcal T \in \Omega(a,b,c))=\frac{w(\mathcal
  T)} {\sum\limits_{ \mathcal S \in \mathfrak T}^{ } w(\mathcal
  S)},\text{  where  } w(\mathcal T)=\prod\limits_{\Diamond \in \mathcal
  T}^{ } w(\Diamond),
\end{equation}
and $w(\Diamond)=q^{-j_{\text{top}}}$
 $($ $j_{\text{top}}$ is the $j$-coordinate of the topmost point of the purple lozenge$).$

In the language of boxed plane partitions, this definition assigns to a
plane partition of volume V (=number of $1\times 1\times 1$ boxes) the
probability proportional to $q^{-V}.$ If we send $q\rightarrow 1$ we
obtain a uniform distribution on the set of all tilings.
 
Consider a transformation of the hexagon by means of the following
affine transformation:

 \begin{figure}[h]
\includegraphics[width=0.65\linewidth]{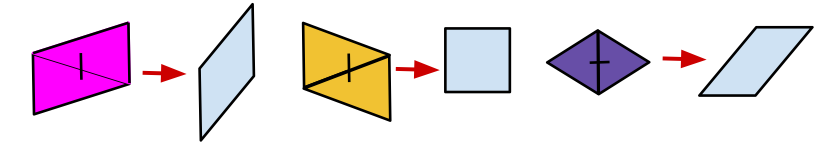}
  \label{fig:aff}
\end{figure}  

In this way it is easy to see the bijection between the tilings and
the set of non-intersecting paths on the plane
lattice \cite{J}. Figure $\ref{fig:hex}$ illustrates this.
 \begin{figure}[h]
\includegraphics[width=0.60\linewidth]{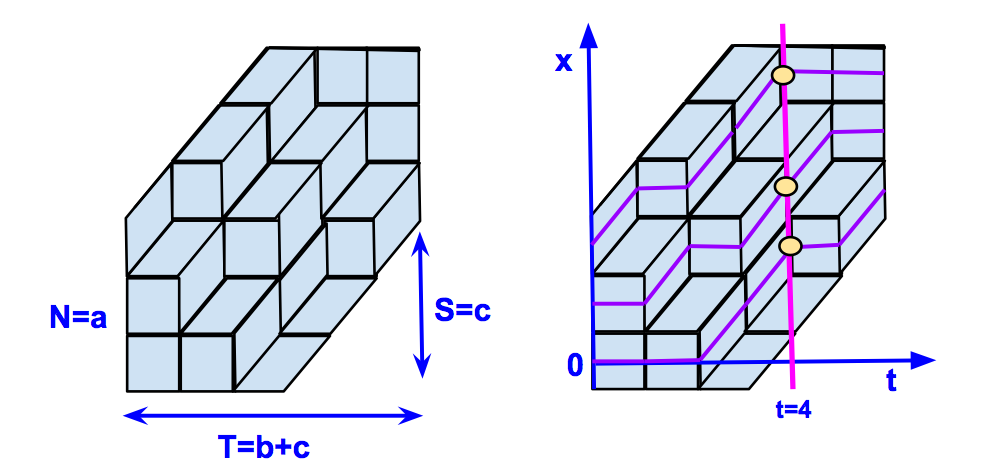}
\caption{Modified tiling of a $3\times 3\times 3$ hexagon and the
corresponding family of non-intersecting paths.}
  \label{fig:hex}
\end{figure}  

Let us introduce new variables $T=a+b,$ $N=a,$ $S=c$ and
coordinates $(t, x)$ (see Figure $\ref{fig:hex}$). Let the
parameters $a,$ $b,$ $c$ be fixed. Consider the
 section of the modified hexagon for some $t$. Then the coordinates of
  the nodes $($intersection of the paths with the section$)$ belong to $$\mathcal F_t=\{x\in \mathbb Z\colon max(0,
    t+S-T)\leq x \leq min(t+N-1, S+N-1)\}.$$ Note that there are
    exactly $N$ nodes for any section $t$.
Denote the configuration of the nodes by $C(t).$

\begin{thm} [\cite{BGR}, Theorem 4.1]
$$\textup{Prob}\{C(t)=(x_1,\dots, x_N)\}=const\cdot\prod\limits_{i<
  j}^{ }(q^{-x_i}-q^{-x_j})^2\prod\limits_{i=1}^{N}w_{t, N, S, T}(x_i),$$
where
$w_{t, N, S,T}(x)$ is the weight function of
the $q$-Hahn orthogonal polynomial ensemble.
\end{thm}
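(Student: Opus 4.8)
The plan is to prove this by establishing the bijection between lozenge tilings and non-intersecting lattice paths, then applying the Lindström–Gessel–Viennot (LGV) lemma to compute the probability of a fixed configuration of nodes on a section.

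The plan is to reduce the statement to the Lindström–Gessel–Viennot (LGV) lemma applied to the non-intersecting path ensemble already constructed in Figure~\ref{fig:hex}, and then to evaluate the resulting determinants explicitly in terms of $q$-binomial coefficients. First I would fix the combinatorial data precisely: the bijection identifies a tiling $\mathcal T$ with a family of $N$ non-intersecting lattice paths whose sources $u_1 < \dots < u_N$ on the left boundary and sinks $v_1 < \dots < v_N$ on the right boundary are determined by the hexagon parameters $T = a+b$, $N = a$, $S = c$. Crucially, the weight $w(\mathcal T) = \prod_{\Diamond} q^{-j_{\text{top}}} = q^{-V}$ factors into a product of per-step weights along the paths, so that each admissible step carries a monomial weight in $q$ depending only on its height. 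With this multiplicative weighting in place, the weighted count of a single path from a point $p$ to a point $r$ becomes a Gaussian $q$-binomial coefficient determined by the horizontal distance and the net vertical displacement.

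Next, to isolate the configuration at the section $t$, I would invoke the semigroup (Karlin–McGregor / Eynard–Mehta) form of LGV: conditioning the non-intersecting family to pass through the nodes $(x_1, \dots, x_N) \in \mathcal F_t$ factors the weighted count as a product of two determinants,
\begin{equation}
\Pr\{C(t) = (x_1, \dots, x_N)\} = \frac{1}{Z}\, \det\big(\phi(u_i, x_j)\big)_{i,j=1}^{N} \cdot \det\big(\psi(x_i, v_j)\big)_{i,j=1}^{N},
\end{equation}
where $\phi$ counts weighted paths from the sources to the section and $\psi$ counts weighted paths from the section to the sinks; both entries are the $q$-binomials from the previous step. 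The product of these two determinants is what will ultimately produce the squared Vandermonde and the weight, while the normalization is absorbed into the constant.

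The heart of the argument, and the step I expect to be the main obstacle, is the explicit evaluation of these two determinants. Each is a determinant of $q$-binomial coefficients with shifted arguments, and by the standard $q$-Vandermonde factorization each evaluates to a $q$-Vandermonde product $\prod_{i<j}(q^{-x_i} - q^{-x_j})$ times a product $\prod_i f(x_i)$ of single-variable factors. Multiplying the two determinants then yields the factor $\prod_{i<j}(q^{-x_i} - q^{-x_j})^2$ directly, while the two families of single-variable factors must be combined and simplified, through repeated use of the $q$-Pochhammer identities for $(y;q)_k$, to match the $q$-Hahn weight $w_{t,N,S,T}(x) = (\alpha\beta q)^{-x}(\alpha q, q^{-N}; q)_x / (q, \beta^{-1} q^{-N}; q)_x$ under the correct identification of $\alpha$ and $\beta$ with the geometric parameters $N$, $S$, $T$ and the section coordinate $t$. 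Tracking these parameter identifications, and confirming that the admissible range of $x$ indeed collapses to $\mathcal F_t$, is the delicate bookkeeping that makes this step the crux; everything else is a direct application of LGV.
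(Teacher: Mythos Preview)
The paper does not contain a proof of this theorem: it is quoted verbatim as Theorem~4.1 of \cite{BGR} and used only as background motivation for why the $q$-Hahn ensemble is of interest. So there is no ``paper's own proof'' against which to compare.

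That said, your proposal is the standard and correct route, and it is essentially what \cite{BGR} does: the $q^{-V}$ weight factors over lozenges, hence over path steps, so the non-intersecting path ensemble carries multiplicative edge weights; LGV (or Karlin--McGregor) then gives the section distribution as a product of two determinants of one-step transition weights, each of which is a $q$-binomial determinant that factors as a $q$-Vandermonde times a product of one-variable terms. Your identification of the ``crux'' is accurate: the algebra lies in matching the product of the two one-variable factors to the $q$-Hahn weight with the correct parameter dictionary in $t,N,S,T$. Nothing in your outline is wrong; it is simply a sketch of a result proved elsewhere rather than in the present paper.
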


Thus, the study of the $q$-Hahn ensemble gives a lot of insight into
the tiling model. Let us get back to the ensemble itself. We are interested in the following function
$$D_k(s)=\sum_{x_1,\dots, x_k < s}\mathcal P(y_{x_1}, \dots
y_{x_k}),$$
which is called the one-interval gap probability function. We will
call the gap probabilities its values for different $s$. Note that
$D_k(s)=\textup{Prob}(\max\{x_i\}<s)$ and $y_{\max\{x_i\}}$ is called
the position of the rightmost particle.
For the corresponding tiling model using the gap probability function
one can describe positions of the
upmost and bottommost nodes. Therefore, in the limit regime it provides a
lot of information about the frozen boundary, a curve separating
the so-called liquid region from the facets.

 Now we are ready to present our main result.
\begin{thm}\label{pr}
The one-interval gap probabilities $D_k(s)$ satisfy the following recurrence
\begin{equation}\label{qH}
\frac{D_k(s)
  D_k(s-2)}{D_k(s-1)^2}=\frac{\a\b(r_sw-q^{-2s+k+1})(r_{s+1}w-\a^{-1}\b^{-1}q^{-2s-k+1})(t_s-q^{-s+1})^2}{q^{-2s}(q^{-s+1}-q^{-N})(q^{-s+1}-\a
  q)(q^{-s+1}-b^{-1}q^{-N})(q^{-s+1}-q)},
\end{equation}
where $(r_s, t_s)$ is the solution of the asymmetric $q$-Painlev\'e
\rom{5} equation 
\begin{align*} (r_{s+1}t_s +1)(r_st_s+1) &=\frac{q^{-2s}(t_s-q^{-N})(t_s-b^{-1}q^{-N})(t_s-aq)
  (t_s-q)}{\a\b^{-1}q^{-2N}(t_s-q^{-s+1})^2},\\
(r_{s+1}t_{s+1}+1)(r_{s+1}t_s+1) &=\\
q^{-4s+1}& \frac{(q^{-N}r_{s+1}+1)(\b^{-1}q^{-N}r_{s+1}+1)(
 \a qr_{s+1}+1)(qr_{s+1}+1)}{\a\b(r_{s+1}w-q^{-2s+k}) (r_{s+1}w-\a^{-1}\b^{-1}q^{-2s-k+1})},
\end{align*}
and the initial conditions can be found explicitly using Proposition
$\ref{initial}$ and Proposition
$\ref{D}$ below.
\end{thm}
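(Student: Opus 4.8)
The plan is to pass from the probabilistic definition of $D_k(s)$ to a determinantal one and then to recognize the resulting determinant as the $\tau$-function of a $q$-connection. First I would use that the $q$-Hahn ensemble is a discrete orthogonal polynomial ensemble, so that $D_k(s)=\Pr(\max\{x_i\}<s)$ equals the ratio of the partition function with the weight $\omega$ restricted to $\{0,\dots,s-1\}$ to the full partition function. Each partition function is a Hankel determinant in the moments of the (restricted) weight in the variable $y_x=q^{-x}$, so $D_k(s)=H_k(s)/H_k$. The crucial observation is that in the combination $D_k(s)D_k(s-2)/D_k(s-1)^2$ the full partition function $H_k$ cancels, so the left-hand side of \eqref{qH} is exactly the discrete-Toda ratio $H_k(s)H_k(s-2)/H_k(s-1)^2$. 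This reduces the theorem to controlling how the truncated Hankel determinant varies as the endpoint $s$ moves.

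The second step is to realize $H_k(s)$ as $\dr$ of a complex attached to a rank-$2$ $q$-connection on $\P^1$ whose singularity structure encodes the $q$-Hahn weight: the twist by $(\a\b q)^{-x}$ together with the four $q$-Pochhammer factors of $\omega$ produce the four distinguished points $q^{-N}$, $\b^{-1}q^{-N}$, $\a q$, $q$ that appear on the right-hand side of \eqref{qH}. This identification is the content of Proposition \ref{D}. The moduli space of such $q$-connections with the prescribed local data is the surface of type $A_{2}^{(1)}$ in Sakai's classification \cite{Sak}; following \cite{BA2} I would introduce Darboux-type coordinates $(r_s,t_s)$ on it — the coordinates appearing in the statement — and express the $\tau$-function ratio $H_k(s)H_k(s-2)/H_k(s-1)^2$ as an explicit rational function of $(r_s,t_s)$ and the parameters. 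Matching this rational function with the displayed right-hand side of \eqref{qH} is the first half of the theorem.

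The third step is the isomonodromy dynamics. The shift $s\mapsto s+1$, which moves the truncation endpoint by one lattice site, should be realized by an elementary Hecke-type modification of the $q$-connection that preserves its $q$-monodromy. Writing the compatibility between the $q$-connection $Y(qz)=A(z)Y(z)$ and this discrete deformation as a zero-curvature (Lax) condition yields, after elimination, the coupled pair of relations for $(r_s,t_s)$ displayed in the statement; this is the promised fresh derivation of the asymmetric $q$-Painlev\'e \rom{5} equation in Sakai's $A_{2}^{(1)}$ hierarchy, together with its Lax pair. The initial conditions are then pinned down by evaluating the determinant at the boundary of the range of $s$ through Proposition \ref{initial} and Proposition \ref{D}.

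I expect the main obstacle to lie in the bookkeeping of normalizations and gauge in the second and third steps. One must set up the $q$-connection and its elementary modification so that the geometric $\tau$-function $\dr$ coincides \emph{on the nose} with the Hankel determinant $H_k(s)$, and so that the zero-curvature condition reproduces precisely the coefficients $\a$, $\b$, $q$, $N$, $s$, $k$ and the constant $w$ in \eqref{qH}. Forcing the four singular points, the twist by $(\a\b q)^{-x}$, and the shift conventions to align exactly — rather than only up to an undetermined scalar or a reparametrization of $(r_s,t_s)$ — is where the real difficulty is concentrated; the verification of the zero-curvature identity itself is then a finite, if delicate, computation.
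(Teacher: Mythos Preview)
Your proposal follows essentially the same route as the paper: identify $D_k(s)$ with a $\tau$-ratio for a rank-two $q$-connection, realize the shift $s\mapsto s+1$ as an elementary modification, and read off both the second logarithmic derivative of $\tau$ and the $q$-$P_V$ recursion from the Lax compatibility (Theorems~\ref{qPV} and~\ref{tau}). One correction: the identification of the gap probability with the $\tau$-function is Theorem~\ref{Fred} (via the Fredholm-determinant form of $D_k(s)$ and the explicit geometric modifications $\L_{\mathfrak M_s}$ of Section~\ref{sec:6}, with the connection matrix $A_s(z)$ built from the Riemann--Hilbert matrix $\mathcal M_s$), not Proposition~\ref{D}, which only supplies the numerical initial values $D_k(k)$ and $D_k(k+1)$.
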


This result predicts the appearance of the Painlev\'e transcendents in the 
limit regime. In Section \ref{sec:7} we consider a limit regime which
corresponds to a simultaneous linear growth of the sides of the hexagon
and present a numerical evidence of the
appearance of the Tracy-Widom distribution near the frozen boundary. We use a method suggested by
Olshanski in \cite{O}.

\subsection{Moduli spaces of $q$-connections} The proof of Theorem
\ref{pr} is based on the ideas introduced in
\cite{BA2}. We start by investigating a notion of the
$\tau$-function of a vector bundle on $\P^1$ equipped with a
$q$-connection. The $\tau$-function we consider is a discrete analogue
of the continuous $\tau$-function introduced by Jimbo, Miwa, and
Ueno in \cite{JMU1}, which proved to play a central role in the theory of the
isomonodromy deformations. In the continuous settings the $\tau$-function is a holomorphic
function on the universal covering space of the space of parameters of the connection,
which vanishes when the corresponding isomonodromy deformation fails
to exist. The $q$-difference Painlev\'e equations
were first studied from the point of view of $q$-isomonodromy
deformations in \cite{JS}.

\begin{defi} Let $q\neq 0,1$ be a complex number. A $q$-connection $\A$ is a pair $(\L, \A(z)), $ where $\L$ is a vector bundle on $\P^1$ and $\A(z)$ is a linear operator $$\mathcal {A}(z)\colon\mathcal{L}_z \rightarrow \mathcal{L}_{qz}$$
that depends on a point $z \in \mathbb{P}^1\setminus \{\infty\}$ in a
rational way $($in particular, $\A(z)$  is defined for all $z\in \mathbb
C$ outside of a finite set$)$; here $\mathcal{L}_z$ is the fiber of $\mathcal{L}$ over $z\in \mathbb {P}^1$. In other words, $\mathcal{A}(z)$ is a rational map between the vector bundle $\mathcal{L}$ and its pullback via the automorphism $\mathbb{P}^1 \rightarrow \mathbb{P}^1$ that sends $z\rightarrow qz$.
\end{defi}
In speaking of $q$-connections, we will frequently abuse the notation
and write $\A(z)$ instead of $(\L,\A(z))$ when there is no danger of confusion.

\begin{defi}
 We say that a point $z_0 \in \P^1$ is a pole of $\A$ if $\A(z)$ is not regular at $z=z_0$. We say that $z_0\in \P^1$ is a zero of $\A$ if the map
$\A^{-1}(z) : \L_{qz} \rightarrow \L_z$
is not regular at $z = z_0$. Note that $\A$ can have a zero and a pole at the same point.
\end{defi}

We proceed by associating to the $q$-Hahn ensemble a vector bundle on $\P^1$ equipped
with a $q$-connection and constructing a sequence of its modifications.

\begin{defi}\label{mod}
Suppose $\mathcal R\colon \L\rightarrow\hat\L$ is a rational
isomorphism between two vector bundles on $\P^1$. We say that $\hat\L$
is a modification of $\L$ on a finite set $S\subset \P^1$ if $\mathcal
R(z)$ and $\mathcal R^{-1}(z)$ are regular outside $S$. We call $\hat\L$ an
upper modification of $\L$ if $\mathcal R$ is regular $($resp. $\L$ is
the lower modification of $\hat \L$$)$.  

 A $q$-connection $(\L,\A(z))$ induces a $q$-connection $(\hat\L, \hat\A(z)),$ which we also call a modification.
\end{defi}
The next step is computation of ``the second logarithmic derivative'' of the
$\tau$-function for the sequence of modifications and proving that it coincides
with ``the second logarithmic derivative'' of the gap probability function. 

In order to follow our plan, we study a special class of $q$-connections in the
spirit of \cite{BA1}. More precisely, we construct moduli spaces of
$q$-connections with a particular singularity structure. The choice of
the singularity structure we consider is dictated by the singularity
structure of the $q$-connection associated to the $q$-Hahn
ensemble.

\begin{rem}
After we choose a trivialization of $\L$ restricted to $\mathbb
A^1=\P-\{\infty\},$ the operator $\A(z)$ can be written in coordinates as a
matrix-valued function $A(z).$ We will call it a
matrix of the $q$-connection. For two different trivializations the
corresponding matrices differ by a $q$-gauge transformation
\begin{equation}\label{gauge1}
\hat A(z)=R(qz)A(z)R^{-1}(z).
\end{equation}
Thus, the moduli space of $q$-connections can be identified with the
moduli space of their matrices modulo $q$-gauge transformations.
\end{rem}

Let us consider $q$-connections $(\A(z), \O\oplus\O(-1)).$ We
impose a number of restrictions on them. After choosing a trivialization the matrices of the
$q$-connections are $2\times 2$ matrices with polynomial entries of
the following form:
$$ A(z)=\left [\begin{array}{cc} a_{11} & a_{12}\\ a_{21} & a_{22}
\end{array}\right],\quad  A(0)=\left [\begin{array}{cc} w & 0\\ 0 & w
\end{array}\right ],$$
$$deg(a_{11})\leq 3,\quad deg(a_{12})\leq 4,\quad deg(a_{21})\leq 2,\quad deg(a_{22})\leq 3 $$
$$\d (A(z))=uv(z-a_1)(z-a_2)(z-a_3)(z-a_4)(z-a_5)(z-a_6),$$
where $a_1,\dots a_6, u,v,
w$ are complex parameters.
Denote $$ S(z)=\left [\begin{array}{cc} 1 & 0\\ 0 & \frac{1}{z}\end{array}\right ].$$
We also require that 
$$\d (S(qz)^{-1}A(z)S(z))=quvz^6+\mathcal{O}(z^5)\text{  and  } \T
(S(qz)^{-1}A(z)S(z))=(u+qv)z^3+\mathcal O(z^2).$$
Note that $S$ is essentially a basis of $\O\oplus\O(-1)$ in the
neighborhood of $\{\infty\} \in \P^1$. We say that such $q$-connections are of type $\lambda=(a_1,\dots, a_6; u,qv;
w, w;3).$

We consider $q$-connections modulo $q$-gauge transformations of the form
$\eqref{gauge1}.$ We can write $R$ in coordinates as 
$$R(z)=\left [\begin{array}{cc} r_{11} & r_{12}\\ 0 & r_{22}
\end{array}\right],$$
$$deg(r_{11})\leq 1,\quad deg(r_{12})\leq 2,\quad deg(r_{22})\leq1. $$

\begin{lem}
\label{mo}
Under certain nondegeneracy conditions on the parameters $a_1,\dots,
a_6, u,v, w$ of a $q$-connection $\A,$ there exits its unique
modification $\hat{\A}$ of type $(qa_1, qa_2\dots, a_6; u,qv;
qw, qw;3).$
\end{lem}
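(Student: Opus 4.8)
The plan is to construct the modification explicitly and then show the construction is forced. First I would read off the pole–zero structure of the rational matrix $R(z)$ implementing the modification $\hat A(z)=R(qz)A(z)R(z)^{-1}$ from the prescribed change of type. Multiplicativity of the determinant gives $\d R(qz)/\d R(z)=\d\hat A(z)/\d A(z)$. Since the target type multiplies $w$ by $q$ and forces $\hat A(0)=qw I$, so that $\d\hat A(0)=(qw)^2=q^2w^2$, consistency with $\d\hat A(0)=uv\prod(\text{shifted }a_i)$ shows that exactly two of the zeros move, namely $a_1,a_2\mapsto qa_1,qa_2$; hence the ratio equals $(z-qa_1)(z-qa_2)/((z-a_1)(z-a_2))$. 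Solving the resulting first–order $q$-difference equation $\d R(qz)=\frac{(z-qa_1)(z-qa_2)}{(z-a_1)(z-a_2)}\,\d R(z)$ by telescoping shows $\d R(z)=c\,z^2/((z-qa_1)(z-qa_2))$, so $R$ has a simple zero at $z=0$ and simple poles at $qa_1,qa_2$. Writing $R(z)=zS(z)$ with $S$ regular and invertible at $z=0$ turns the modification into $\hat A(z)=qS(qz)A(z)S(z)^{-1}$, which already makes the behaviour at the fixed point $z=0$ automatic: $\hat A(0)=qS(0)A(0)S(0)^{-1}=qwI$, matching the value $(qw,qw)$ of the target type.

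Next I would parametrize $S$ through its numerator, $S(z)=B(z)/((z-qa_1)(z-qa_2))$ with $B(z)=B_0+B_1z$ and $\d B(z)=c(z-qa_1)(z-qa_2)$; a two-by-two computation then gives $S(z)^{-1}=\mathrm{adj}(B(z))/c$, which is polynomial, so the only possible poles of $\hat A$ come from the poles of $S(qz)$ at $z=a_1,a_2$. Requiring $\hat A$ to be regular there gives the conditions $B(qa_i)A(a_i)=0$. Here the nondegeneracy hypothesis enters for the first time: because $a_1,a_2$ are zeros of $\d A$, the matrix $A(a_i)$ has rank one, and the condition says precisely that the rows of $B(qa_i)$ lie on the one-dimensional cokernel of $A(a_i)$. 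The remaining requirements — the degree bounds at $z=\infty$ together with the normalizations of $\d(S(qz)^{-1}\hat A S(z))$ and $\T(S(qz)^{-1}\hat A S(z))$ encoded by $(u,qv)$ — translate into further linear equations on $B_0,B_1$. The determinant factorization $\d\hat A=uv(z-qa_1)(z-qa_2)(z-a_3)\cdots(z-a_6)$ is then automatic from the determinant computation above, so it contributes no independent conditions.

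I would then assemble all of these into one linear system for the entries of $B_0,B_1$, to be solved modulo the scalar freedom $R\mapsto\lambda R$ and modulo the upper-triangular gauge of the target bundle, and prove it has a unique solution. Existence follows by exhibiting the solution; uniqueness follows by showing the associated homogeneous system has only the trivial solution, which is exactly the content of the nondegeneracy conditions on $a_1,\dots,a_6,u,v,w$: they guarantee that a certain determinant built from the values $A(a_i)$, their cokernel lines, and the leading coefficients $u,v,w$ does not vanish. Equivalently, any two modifications of the same target type differ by a global automorphism of $\O\oplus\O(-1)$, and matching the normalized type forces that automorphism to be trivial, giving uniqueness of $\hat{\A}$ up to isomorphism of the resulting $q$-connection.

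Finally, I would check that the bundle really stays $\O\oplus\O(-1)$ rather than jumping to some $\O(k)\oplus\O(-1-k)$. The hard part is exactly this last control together with the nonsingularity of the linear system: one must verify that the leading ($z^3$ and $z^4$) coefficients of $\hat A$ produced by the construction have the shape dictated by the splitting type $\O\oplus\O(-1)$, and that the generic nonvanishing of the relevant determinant of leading data persists on the locus cut out by the nondegeneracy conditions. I expect the bookkeeping of the degrees at $\infty$ — balancing the $O(1/z)$ decay of $S(qz)$ against the degree-$4$ growth of $a_{12}$ — to be the most delicate computational step, while the conceptual crux is isolating the precise nondegeneracy conditions that make the homogeneous system trivial.
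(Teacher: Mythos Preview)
The paper does not prove Lemma~\ref{mo} as a standalone result; the construction is folded into the proof of Theorem~\ref{qPV} in Section~\ref{sec:4}. There the argument is very short: the paper \emph{postulates} that $R$ is a polynomial matrix with entries of degrees $(1,2;0,1)$ --- this graded shape being forced by the fact that $R$ is a rational morphism between two copies of $\O\oplus\O(-1)$ --- and with $\det R(z)=c(z-a_1)(z-a_2)$. It then imposes that $\hat A=R(qz)A(z)R^{-1}(z)$ be regular at $a_1,a_2$ and simply reports that the resulting polynomial system fixes the coefficients of $R$ up to a scalar.

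Your route is different: you try to \emph{derive} the shape of $R$ from the determinant identity $\det R(qz)/\det R(z)=\det\hat A(z)/\det A(z)$ and only afterwards bring in the bundle. There is a real gap here. Knowing $\det R$ does not tell you the entrywise degree pattern of $R$; that pattern is governed by the splitting type of source and target at $\infty$. Your ansatz $R(z)=zB(z)/((z-qa_1)(z-qa_2))$ with $B=B_0+B_1z$ a generic linear $2\times2$ matrix is not compatible with $R$ being a rational map $\O\oplus\O(-1)\to\O\oplus\O(-1)$ that is an isomorphism near $\infty$: in the bundle's local frame the $(2,1)$ entry of $R$ picks up an extra factor of $z$, so unless $(B_1)_{21}=0$ your $R$ already modifies the connection at $\infty$ and you have left the target type. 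More importantly, there is no reason the numerator should be linear at all --- the paper's $R$ has a $(1,2)$ entry of degree~$2$, which your parametrization cannot accommodate. You recognize at the end that controlling the bundle type is ``the hard part,'' but it is not a check to perform after solving the linear system: it is the input that determines which linear system to write down in the first place. That is exactly how the paper proceeds.

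A secondary point worth flagging: with the gauge convention $\hat A(z)=R(qz)A(z)R^{-1}(z)$ and the target type literally as stated in the lemma, your solution $\det R(z)=cz^2/((z-qa_1)(z-qa_2))$ is indeed the unique rational solution of the $q$-difference equation for $\det R$, whereas the paper's $\det R(z)=c(z-a_1)(z-a_2)$ solves the equation for the shift $a_i\mapsto q^{-1}a_i$. So there appears to be a direction inconsistency between the lemma's statement and Section~\ref{sec:4}. This does not rescue your argument, though: whichever direction the shift goes, the entrywise shape of $R$ must be read off from the bundle, not from $\det R$ alone.
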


Let us assume that the parameters $a_1,\dots,
a_6, u,v, w$ are generic. We will discuss what it means for the parameters to be generic
in Section \ref{sec:3}. We proceed by showing that the moduli space $M_{\lambda}$ of $q$-connections of type $\lambda=(a_1,\dots
a_6; u,qv; w,w;3)$ modulo $q$-gauge transformations is two-dimensional and its smallest smooth
compactification can be identified with $\P^2$ blown-up at nine
points; more precisely, it is a Sakai surface of type $A_{1}^{(2)}.$ 
Let us state our next result.
\begin{thm}[$q$-$P_V$]\label{qPV}
Consider the second root of the matrix element $a_{21}$ $($note that $0$ is
always a root of $a_{21}$$)$ as the first
coordinate on $M_{\lambda}$, denote it by $t$. Let the second coordinate
be $$r=\frac{w(t-a_3)(t-a_4)(t-a_5)(t-a_6)}{a_{11}(t)a_3a_4a_5a_6t}-\frac{1}{t}.$$
Consider the modification of $\L$ to
$\hat\L$ from Lemma \ref{mo}. It shifts 
$$a_1\rightarrow qa_1, \quad a_2\rightarrow qa_2,\quad w\rightarrow qw.
$$
Then this modification defines a regular morphism $\textup{qPV}$ between two moduli spaces
$M_{\lambda}$ and
$M_{\hat\lambda},$ the moduli space of $q$-connections of type $$\hat{\lambda}=(qa_1, qa_2,\dots, a_6; u,qv;
qw, qw;3).$$
Moreover, the coordinates $(\hat t,\hat r)$ on the moduli space
$M_{\hat\lambda}$ are related
to $(t,r)$ by the asymmetric $q$-Painlev\'e V equation
\begin{align} \label{recc1}
(r\hat t+1)(rt+1)&=\frac{uva_1^2a_2^2(ra_3+1)(ra_4+1)(ra_5+1)(ra_6+1)}{(rw-va_1a_2)(qrw-ua_1a_2)},\\
(\hat r \hat t +1)(r\hat t+1)&= \frac{a_1a_2(\hat
t-a_3)(\hat t-a_4)(\hat t-a_5) (\hat t-a_6)}{a_3a_4a_5a_6(q\hat
t-a_1)(q\hat t-a_2)}. \nonumber
\end{align}
\end{thm}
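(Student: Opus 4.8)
The plan is to track explicitly how the modification $\mathcal R\colon \L \to \hat\L$ from Lemma \ref{mo} transforms the matrix $A(z)$ of the $q$-connection and then read off the induced transformation of the coordinates $(t,r)$. First I would choose a convenient normal form for the gauge-transformation matrix $R(z)$ realizing the modification, using the freedom in \eqref{gauge1} and the prescribed degree bounds on $r_{11}, r_{12}, r_{22}$. Since the modification is uniquely determined (by Lemma \ref{mo}) and shifts $a_1 \to qa_1$, $a_2 \to qa_2$, $w \to qw$, the entries of $R(z)$ will be constrained by requiring that $\hat A(z) = R(qz)A(z)R^{-1}(z)$ again have the polynomial form and determinant/trace normalizations defining type $\hat\lambda$. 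Working out these constraints should pin down $R$ essentially uniquely and express its entries as rational functions of the parameters and of the coordinate $t$ (the second root of $a_{21}$).

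Next I would compute $\hat t$, the second root of $\hat a_{21}$, directly from the transformed matrix entry $\hat a_{21}$, and similarly compute $\hat r$ from the defining formula
$$\hat r = \frac{qw(\hat t - a_3)(\hat t - a_4)(\hat t - a_5)(\hat t - a_6)}{\hat a_{11}(\hat t)a_3 a_4 a_5 a_6 \hat t} - \frac{1}{\hat t},$$
remembering that $w$ has been shifted to $qw$ in $M_{\hat\lambda}$. The goal is to show these satisfy the two displayed relations \eqref{recc1}. The cleanest route is probably to verify the second relation first, since it involves only $(\hat t, \hat r)$ together with $r\hat t$ and encodes the singularity/determinant structure at the points $a_3,\dots,a_6$; it should follow from evaluating the determinant condition $\d(A(z)) = uv\prod(z-a_i)$ at $z = \hat t$ and using $A(0) = wI$. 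The first relation mixes $(t,r)$ and $\hat t$, and I expect it to come out of comparing the $(1,1)$ entries of $A$ and $\hat A$, i.e. tracking $a_{11}(t)$ through the gauge transformation and invoking the defining expression for $r$.

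The main obstacle, I expect, is the first relation in \eqref{recc1}, specifically getting the factors $(rw - va_1a_2)$ and $(qrw - ua_1a_2)$ in the denominator to emerge correctly. These are exactly the places where the eigenvalue data $u, v$ at infinity (encoded in the trace/determinant conditions on $S(qz)^{-1}A(z)S(z)$) interacts with the behavior of $A$ at $z=0$, where $A(0)=wI$. Handling this requires carefully matching the leading asymptotics of $\hat A$ near $\infty$ under the shift $S(z)$, so that $\hat A$ genuinely lies in type $\hat\lambda$ with the same $u, qv$; the subtlety is that $R(z)$ itself contributes to the leading behavior, and one must ensure the normalization $A(0)=wI \mapsto \hat A(0)=qwI$ is respected, which is what forces the shift $w\to qw$ and ultimately produces the asymmetric (as opposed to symmetric) form of the two equations.

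The bookkeeping throughout is a large but essentially mechanical elimination among polynomials whose degrees are bounded by the type data; I would organize it by repeatedly using $a_{21}(t)=0$ (and its counterpart $\hat a_{21}(\hat t)=0$) to reduce expressions modulo the relevant factors, so that each identity collapses to a check at finitely many points. Once both relations are verified on the dense locus of generic parameters, regularity of the morphism $\textup{qPV}$ on $M_\lambda$ follows because the coordinates $(\hat t,\hat r)$ have been exhibited as rational functions of $(t,r)$ with no poles on the two-dimensional moduli space, extended across the compactification using the Sakai-surface structure of type $A_1^{(2)}$ identified above.
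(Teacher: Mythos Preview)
Your proposal is correct and follows essentially the same approach as the paper: both put $A(z)$ into a normal form on a dense open subset, determine the gauge matrix $R(z)$ (with $\det R(z)=c(z-a_1)(z-a_2)$) from the requirement that $\hat A(z)=R(qz)A(z)R^{-1}(z)$ again lie in the target moduli space, and then read off $(\hat t,\hat r)$ as rational functions of $(t,r)$. The only organizational difference is that the paper first works with an intermediate coordinate $p$ (essentially $a_{11}(t)$, via $s=(p-w)/t$ in the normal form), carries out the elimination in $(t,p)$, and only at the end passes to $r$; the paper stresses that discovering the correct formula for $r$ was the genuinely nontrivial step, whereas you have $r$ handed to you in the statement and can verify directly.
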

In the proof we also explicitly compute a Lax pair for this equation in a new
fashion.

The asymmetric $q$-Painlev\'e \rom{5} equation was first found by
B. Grammaticos, A. Ramani
and their co-authors \cite{GR}. The form of the equation \ref{recc1} we use
differs from the standard one by a change of notation. In \cite{Sak2} Sakai showed that the asymmetric $q$-Painlev\'e \rom{5} equation appears
as a particular case of the four-dimensional $q$-Garnier system
($N=2$) and first expressed it in a Lax formalism. The computations
done by Sakai are not applicable in our situation since the
modifications of the parameters we aim to study are different from the ones he
considered, see Section \ref{sec:4} for details. The other two
approaches to obtaining Lax pairs were presented in \cite{OW},
\cite{Y}.

We believe that $q$-orthogonal ensembles, for which the similar
results were obtained in \cite{BB}, also fit in
the framework of $q$-connections.

The paper is organized as follows. 

In Section \ref{sec:3} we discuss the
properties of $q$-connections and show that the geometric approach to
isomonodromy deformation of $q$-connections implies that the
transformations lift to isomorphisms between surfaces. We also define
modifications of
a special class of $q$-connections which will lead to the
asymmetric $q$-difference Painlev\'e $V.$ 

Then in Section \ref{sec:4} we give the proof of this result. 

Section \ref{sec:5} is devoted to the discussion of the $q$-analogue of the
constructions introduced in \cite{BA2}. The $q$-generalization is
mostly straightforward, however, the computations require some
work. We conclude this section by providing the recipe for computing the $\tau$-function of
the isomonodromy transformations leading to the
asymmetric $q$-Painlev\'e \rom{5} equation.

In Section \ref{sec:6} we investigate the $q$-Hahn orthogonal polynomial ensemble
and give a proof of Theorem \ref{pr}.

In Section \ref{sec:7} we present a numerical evidence of the
appearance of the Tracy-Widom distribution near the frozen boundary of
the tiling model of a hexagon by rhombi with the
weight proportional to $q^{-Volume}$.

\subsection*{Acknowledgements}
The author is very grateful to Alexei Borodin for suggesting the
project and for many helpful discussions.

\section{Moduli spaces of $q$-connections}
\label{sec:3}
\subsection{Preliminaries.}
\begin{defi}
Let $\L$ be a rank $m$ vector bundle on $\P^1$ and $\A(z)$ be a $q$-connection on $\L$. Suppose $\A(z)$ satisfies the following conditions:
\begin{enumerate}[(i)]
\item The only zeroes and poles of $\A(z)$ are as follows: a pole of
  order $n$ at infinity and simple zeroes at $k$ distinct points
  $a_1,\dots, a_k\in \mathbb A^1$ $($i.e., $\A(z)$ is regular and $\d
  (\A(z)) $ has zero of order one at $a_i$$)$.
\item On the formal neighborhood of $\infty\in\P^1$, there exists a trivialization $\R(z)\colon \mathbb C^2\rightarrow \L$ such that the matrix $\A$ with respect to $\R$ satisfies 

$$\R(qz)\A(z)\R^{-1}(z)=A_nz^n+\textup{O}(z^{n-1}),$$
where $\A_n$ is a semi-simple matrix with eigenvalues $\rho_1,\dots, \rho_m$.
\item On the neighborhood of zero $\A(z)|_{z=0}=A_0$ is semi-simple with eigenvalues $\t_1,\dots,\t_m.$
\end{enumerate}
We call  $q$-connection $\A(z)$ or, more precisely, the pair $(\L,\A(z))$ a $q$-connection of type $\lambda =(a_1, \dots, a_k; \rho_1, \dots, \rho_m; \theta_1, \dots, \theta_m;n).$
\end{defi}

\begin{rem}
We will also consider $q$-connections that have simple poles besides
simple zeroes. The operation ``multiplication by scalar'' turns a pole
into a zero and vice versa. More precisely, let $f(z)\neq 0$ be a
rational function on $\P^1$, and let $\A(z)$ be a $q$-connection on
$\L$. Then the product $f(z)\A(z)$ is again a $q$-connection on $\L.$
For any $q$-connection $\A(z)$, we can choose $f(z)$ such that the only pole of the product $f(z)\A(z)$ is at infinity. 
In this case we will slightly modify the notation for the type of the connection. We will write $\A$ is of type $(a_1, \dots, a_k; b_1,\dots, b_l;\rho_1, \dots, \rho_m; \theta_1, \dots, \theta_m;n)$, where $(a_1, \dots, a_k)$ and $(b_1,\dots, b_l)$ is the collection of zeroes and poles respectively.

\end{rem}

\begin{defi}
Let $\A(z)$ be a $q$-connection on $\L$ of type $(a_1, \dots,
a_k;\rho_1, \dots, \rho_m; \theta_1, \dots, \theta_m;n).$  The
$q$-degree of $\A(z)$ is the following quantity 
$$\textup{deg}_q(\A(z))=\frac{\prod^{k}_{i=1} a_i \prod_{i=1}^{m} \rho_i}{
  \prod^{m}_{i=1}\t_i}.$$
\end{defi}

\begin{lem}\label{irr}
Let $\A(z)$ be a $q$-connection on a rank $m$ vector bundle $\L$
of type $$(a_1, \dots, a_k;\rho_1, \dots, \rho_l; \theta_1, \dots , \theta_m;n)$$ with invertible $A_0$ and $A_n.$ Then the following holds
$$mn=k-l\quad \text{and} \quad q^{-\textup{deg}(\L)}=\textup{deg}_q(\A(z)).$$
\end{lem}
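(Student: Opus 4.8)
The plan is to reduce everything to a rank-one computation by taking determinants. The map $\det\A(z)$ is a $q$-connection on the line bundle $\det\L$, whose degree is $\deg\L$; after fixing the trivialization of $\L$ on $\mathbb A^1=\P^1\setminus\{\infty\}$, it is simply multiplication by the rational function $\psi(z):=\det A(z)$, where $A(z)$ is the matrix of $\A$ in that trivialization. All of the quantities entering the two formulas — the zeroes $a_i$, the infinity-eigenvalues $\rho_i$, and the zero-eigenvalues $\theta_i$ — can be read off from $\psi$, so the whole statement becomes an identity about this one rational function.

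First I would locate the zeroes and poles of $\psi$ on $\mathbb A^1$. Since $A_0=A(0)$ is invertible, $\psi(0)=\det A_0=\prod_{i}\theta_i\neq 0$, so $\psi$ has neither a zero nor a pole at the origin. By hypothesis the only zeroes of $\A$ on $\mathbb A^1$ are the simple zeroes at $a_1,\dots,a_k$ and its only poles are the simple poles at $b_1,\dots,b_l$; taking determinants, these become simple zeroes and poles of $\psi$. Hence
\[
\psi(z)=C\,\frac{\prod_{i=1}^{k}(z-a_i)}{\prod_{j=1}^{l}(z-b_j)}
\]
for some constant $C$, and in particular $\psi(z)\sim C\,z^{k-l}$ as $z\to\infty$.

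The core of the argument is to compute the behavior of $\psi$ at infinity a second way, through the trivialization $\R$ of condition (ii). In the frame $\R$ the matrix of $\A$ is $\widetilde A(z)=A_n z^{n}+\mathrm O(z^{n-1})$ with $A_n$ invertible, so $\det\widetilde A(z)\sim(\prod_i\rho_i)\,z^{mn}$. The frame $\R$ and the $\mathbb A^1$-frame differ by a transition matrix $T$, and since $\det T$ is the transition function of $\det\L\cong\O(\deg\L)$ one has $\det T(z)=c\,z^{\deg\L}$; consequently $\det\widetilde A(z)=\frac{\det T(z)}{\det T(qz)}\,\psi(z)=q^{-\deg\L}\psi(z)$. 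Comparing the two expressions for the order of the pole at infinity gives $mn=k-l$, which is the first assertion. Comparing leading coefficients gives $\prod_i\rho_i=q^{-\deg\L}C$, and feeding this together with $\psi(0)=\prod_i\theta_i$ into the explicit form of $\psi$ eliminates $C$ and yields $q^{-\deg\L}=\prod_i a_i\,\prod_i\rho_i/\prod_i\theta_i=\deg_q(\A)$, the second assertion.

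The step I expect to be the main obstacle is the identification of the factor $q^{-\deg\L}$: everything else is bookkeeping of zeroes, poles, and leading terms, but this constant is precisely where $\deg\L$ enters, and pinning it down requires knowing that the determinant of the transition between the $\mathbb A^1$-frame and the frame $\R$ at infinity is $c\,z^{\deg\L}$ and tracking how it transforms under $z\mapsto qz$. I would also keep an eye on the scalar normalizations and the sign $(-1)^{k-l}$ coming from evaluating $\psi$ at the origin, which are harmless in the cases at hand but must be absorbed into the conventions; when $\A$ genuinely has poles these contribute to $\deg_q$ through a denominator factor $\prod_j b_j$, or one may first clear them using the rescaling $f(z)\A(z)$ of the preceding remark.
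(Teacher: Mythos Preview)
Your proof is correct and takes exactly the same route as the paper: pass to the determinant $q$-connection $\det\A$ on the line bundle $\det\L$ and read everything off in rank one. The paper's argument is a two-sentence sketch that stops after recording the type of $\det\A$, whereas you actually carry the rank-one computation through (identifying the factor $q^{-\deg\L}$ via the transition determinant between the affine frame and the formal frame $\R$ at infinity); your flag on the sign $(-1)^{k-l}$ is apt, though in the paper's applications $k-l$ is even so the issue never surfaces.
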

\begin{proof}
 Determinant of the morphism $\L_{z}\rightarrow \L_{qz}$ is a map
 $\d(\A(z)):\bigwedge ^m\L_z\rightarrow \bigwedge^m \L_{qz}$. In this
 way, we can define a $q$-connection $\d(\A)$ on a line bundle
 $\d(\L).$ Note that $\d(\A)$ is of type $(a_1, \dots, a_k;\prod_{i=1}^{m}\rho_i; \prod_{i=1}^{l}\t_i;k-l).$
\end{proof}

\begin{cor}\label{cirr}
Suppose $\A(z)$ is a $q$-connection on rank $2$ vector bundle $\L$ of
type $$(a_1, \dots, a_k; b_1,\dots, b_l;\rho_1, \rho_2; \theta,
\theta;n)$$ with invertible $A_0$ and $A_n.$ Suppose that for any
$I\subset\{1,\dots, k\}$ we have $\frac{\rho_j }{\t}\prod\limits_{i\in
  I}a_i\neq q^k$ for any $k\in \mathbb Z$ and $j=1,2$.  Then $\A(z)$
is irreducible, i.e., there is no rank one subbundle $\ell\subset \L$ such that $\A(\ell_z)\subset \ell_{qz}$ for all $z$. 
\end{cor}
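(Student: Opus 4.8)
The plan is to argue by contradiction: I assume there is a rank-one subbundle $\ell\subset\L$ with $\A(\ell_z)\subset\ell_{qz}$ for all $z$, and extract from it a numerical identity that the genericity hypothesis forbids. Restricting $\A$ to the invariant line then yields a $q$-connection $\A|_\ell$ on the line bundle $\ell\cong\O(e)$, and the whole argument consists of reading off the type of $\A|_\ell$ and feeding it into Lemma \ref{irr} in the rank-one case $m=1$.

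First I would determine the local data of $\A|_\ell$ at $0$ and at $\infty$. At $z=0$ the matrix $A_0$ is semisimple with both eigenvalues equal to $\theta$, hence $A_0=\theta\cdot\mathrm{Id}$ is scalar; therefore $\A|_\ell$ takes the value $\theta$ on any line, so the eigenvalue of $\A|_\ell$ at $0$ is $\theta$ (and it is nonzero since $A_0$ is invertible). At $\infty$, working in the trivialization $\R$ of condition (ii), the invariance $A(z)\ell(z)\subset\ell(qz)$ taken at leading order in $z$ forces the limiting line $\ell(\infty)$ to satisfy $A_n\,\ell(\infty)\subseteq\ell(\infty)$, i.e. $\ell(\infty)$ is an eigenline of $A_n$. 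Since $A_n$ is invertible, the corresponding eigenvalue is one of $\rho_1,\rho_2$, say $\rho_j$, and the leading term of $\A|_\ell$ is $\rho_j z^{n}$; in particular $\A|_\ell$ has a pole of order exactly $n$ at infinity with (nonzero) leading coefficient $\rho_j$. This step — pinning down that $\ell(\infty)$ is forced to be an eigendirection of $A_n$, so that the infinity-eigenvalue is exactly one of the $\rho_j$ — is the one I expect to require the most care, since it is where semisimplicity and invertibility of $A_n$ are essential and where both values $j=1,2$ genuinely enter.

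Next I would account for the finite zeros. As each $a_i$ is a simple zero, $A(a_i)$ drops rank by one with a one-dimensional kernel, and $\A|_\ell$ vanishes at $a_i$ precisely when $\ell_{a_i}=\ker A(a_i)$; let $I\subset\{1,\dots,k\}$ be the set of such indices. Thus $\A|_\ell$ is a rank-one $q$-connection with finite zeros $\{a_i\}_{i\in I}$, eigenvalue $\theta$ at $0$, and leading eigenvalue $\rho_j$ at $\infty$, and both $A_0,A_n$ of $\A|_\ell$ are invertible so that Lemma \ref{irr} applies. (If $\A$ carries genuine finite poles $b_1,\dots,b_l$, the same bookkeeping splits them between $\ell$ and the quotient $\L/\ell$; in the situation of interest $\det\A$ has no finite poles, so this does not occur, and I would present the pole-free case as the main one.)

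Finally I would invoke Lemma \ref{irr} for $\A|_\ell$ with $m=1$, which gives $q^{-\deg(\ell)}=\deg_q(\A|_\ell)=\frac{\rho_j\prod_{i\in I}a_i}{\theta}$. The left-hand side is an integer power of $q$ because $\deg(\ell)\in\mathbb Z$. Hence $\frac{\rho_j}{\theta}\prod_{i\in I}a_i=q^{-\deg(\ell)}\in q^{\mathbb Z}$ for some subset $I\subset\{1,\dots,k\}$ and some $j\in\{1,2\}$, which directly contradicts the standing hypothesis that $\frac{\rho_j}{\theta}\prod_{i\in I}a_i\neq q^{k}$ for all $k\in\mathbb Z$ and $j=1,2$. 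This contradiction shows that no invariant rank-one subbundle $\ell$ can exist, so $\A$ is irreducible.
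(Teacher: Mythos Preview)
Your proposal is correct and follows essentially the same route as the paper: assume an invariant line subbundle $\ell$, observe that the induced rank-one $q$-connection on $\ell$ has zeros among the $a_i$, eigenvalue $\theta$ at $0$, and eigenvalue $\rho_j$ at $\infty$ for some $j$, then apply Lemma~\ref{irr} to obtain $\frac{\rho_j}{\theta}\prod_{i\in I}a_i\in q^{\mathbb Z}$, contradicting the hypothesis. The paper's proof is a terse three lines that assert the type of $\A_\ell$ without justification; your version spells out why $A_0=\theta\cdot\mathrm{Id}$ forces the eigenvalue at $0$, why $\ell(\infty)$ must be an eigenline of $A_n$, and why only a subset $I$ of the $a_i$ survive as zeros --- details the paper omits but which your write-up handles correctly.
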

\begin{proof}
Suppose that $\ell\in \L$ is an invariant subbundle of rank $1$. Then
$\A$ induces a $q$-connection $\A_{\ell}$ on $\ell$. All its zeroes
belong to $\{a_1,\dots, a_k\}$ and its type is either $(a_1,\dots,
a_k;\rho_1;\t;n)$ or $(a_1,\dots, a_k;\rho_2;\t;n)$. Now Lemma \ref{irr} leads to a contradiction.
\end{proof}

\begin{lem}
Suppose that $\A$ is an irreducible $q$-connection on a rank $2$
vector bundle $\L$ of the type  $\lambda =
(a_1,...,a_k;\rho_1,\rho_2;\t_1,\t_2;n)$ with invertible $A_0$ and $A_n.$ If $\L\simeq\O(n_1)\oplus\O(n_2),$ then $|n_1-n_2|\leq n.$ Moreover, if $q\rho_1=\rho_2$ the inequality is strict.
\end{lem}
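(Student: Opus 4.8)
The plan is to work in a global frame adapted to the splitting and to detect invariant sub-line-bundles from the vanishing of off-diagonal blocks. Without loss of generality assume $n_1\ge n_2$, and let $\sigma\colon\P^1\to\P^1$ be the automorphism $z\mapsto qz$, so that $\A$ is a map $\L\to\sigma^*\L$. Fixing the decomposition $\L=\O(n_1)\oplus\O(n_2)$ and recalling $\sigma^*\O(d)\cong\O(d)$, the operator $\A$ is a global section of $\operatorname{Hom}(\L,\sigma^*\L)$ with a pole of order at most $n$ at $\infty$ and no other poles; block by block this means the $(i,j)$ entry $A_{ij}$ is a global section of $\O(n_i-n_j+n)$. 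In particular the lower-left block, namely the composition $\O(n_1)\hookrightarrow\L\xrightarrow{\A}\sigma^*\L\twoheadrightarrow\sigma^*\O(n_2)$, is a section of $\O(n_2-n_1+n)=\O(n-(n_1-n_2))$.

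For the inequality, suppose $n_1-n_2>n$. Then $\O(n-(n_1-n_2))$ has no nonzero global sections, so this block vanishes identically. This says exactly that $\A$ carries the destabilizing sub-line-bundle $\O(n_1)\subset\L$ into $\sigma^*\O(n_1)$, i.e. $\O(n_1)$ is an $\A$-invariant sub-line-bundle. Since $\A$ is irreducible this is impossible, and therefore $|n_1-n_2|\le n$.

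For the strict inequality under $q\rho_1=\rho_2$, I would analyze the boundary case $n_1-n_2=n$. Here the lower-left block is a section of $\O(0)$, i.e. a constant $c$, and irreducibility forces $c\ne0$ (otherwise $\O(n_1)$ would again be invariant). Passing to the formal trivialization at $\infty$ given by the splitting frame $f_i=z^{n_i}e_i$, the matrix of $\A$ becomes $\tilde A_{ij}(z)=q^{-n_i}z^{n_j-n_i}A_{ij}(z)$, whose coefficient of $z^{n}$ is the leading matrix $A_n$; a direct computation shows that the $(2,1)$-entry of $A_n$ equals $q^{-n_2}c\ne0$, so $A_n$ is genuinely non-triangular with the prescribed eigenvalues $\rho_1,\rho_2$. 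The plan is then to try to split off one eigenline of $A_n$ as an $\A$-invariant sub-line-bundle of degree $n_2$: writing such a candidate as the span of $(g(z),1)$ with $\deg g\le n$, invariance becomes the scalar $q$-Riccati equation $A_{11}g+A_{12}=c\,g(z)g(qz)+A_{22}g(qz)$, which one attempts to solve order by order starting from the eigen-slope of $A_n$. The goal is to show that $q\rho_1=\rho_2$ forces a contradiction with irreducibility, with the resonance entering as the (non)vanishing of the obstruction to closing up this recursion.

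The main obstacle is precisely this last step. The comparison of leading coefficients does \emph{not} by itself detect the resonance: for $c\ne0$ the matrix $A_n$ can be arranged to have any eigenvalue pair, including $(\rho_1,q\rho_1)$, so the argument must bring in the subleading terms at $\infty$ together with the invertibility of $A_0$ at the origin (and, if needed, the global constraint $q^{-\deg\L}=\deg_q(\A)$ from Lemma~\ref{irr}). Identifying the exact obstruction, ideally by reinterpreting it as the vanishing of a $q$-difference cohomology group twisted by $\rho_2/\rho_1$ and checking that this happens exactly at $q\rho_1=\rho_2$, is where the real work lies; a careful order-by-order analysis of the Riccati recursion already in the case $n=1$ is the natural first check that the resonance is the operative condition.
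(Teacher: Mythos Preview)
Your argument for the inequality $|n_1-n_2|\le n$ is exactly the paper's: the map
\[
\psi\colon\ \ell=\mathcal{O}(n_1)\hookrightarrow\mathcal{L}\xrightarrow{\ \mathcal{A}\ }s^*\mathcal{L}\twoheadrightarrow s^*(\mathcal{L}/\ell)
\]
is your lower-left block, it has at most an order-$n$ pole at $\infty$, and its vanishing would make $\ell$ invariant.

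For the strict inequality the paper does not attempt anything like your Riccati recursion. Its argument is a single line: the coefficient of $z^n$ in $\psi$ (computed in the local frame at $\infty$) is the $(2,1)$-entry of the leading matrix, and under the hypothesis this leading matrix is asserted to be \emph{scalar}; hence its off-diagonal entry vanishes, $\psi$ has pole order at most $n-1$, and the same degree count yields $n_1-n_2\le n-1$. No search for an invariant sub-line-bundle of degree $n_2$ is needed.

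Your difficulty in the boundary case is real, however, and it points to a slip in the hypothesis rather than a missing idea in your proof. As you correctly compute, the leading matrix in the splitting frame at $\infty$ is conjugate to $A_n$ and therefore has eigenvalues $\rho_1,\rho_2$; it is scalar precisely when $\rho_1=\rho_2$, \emph{not} when $q\rho_1=\rho_2$. Under $q\rho_1=\rho_2$ (with $q\ne1$) the eigenvalues are distinct and the $(2,1)$-entry can perfectly well be nonzero --- already for $n=1$ one can write down irreducible examples on $\mathcal{O}\oplus\mathcal{O}(-1)$ with $|n_1-n_2|=n$, so the strict inequality fails under that hypothesis. The intended condition is $\rho_1=\rho_2$ (the direct $q$-analogue of the corresponding lemma for $d$-connections in \cite{BA1}), and with it the paper's one-line argument is complete. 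You should not pursue the Riccati approach to rescue the version with $q\rho_1=\rho_2$.
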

\begin{proof}
Without loss of generality we can assume that $n_1\geq n_2.$ Let
$\ell\subset \L$ be a rank $1$ subbundle of degree $n_1$. Since
$(\L,\A)$ is irreducible, $\ell$ is not invariant, thus, a rational
map $\psi\colon \ell\rightarrow \L\rightarrow s^{*}\L\rightarrow
s^*{\L /\ell}$ is not identically zero (recall that $s\colon z\in
\P^1\rightarrow qz\in \P^1$). Notice that $\psi$ can have at most an
order $n$ pole at $\infty$ (and no other poles). Therefore,
$n_1=\textup{deg}(\ell)\leq n+\textup{deg}(\L/\ell)=n+n_2.$ The second statement follows
from the fact that if $q\rho_1=\rho_2$ then $\psi$ has at most an order
$n-1$ pole, because the coefficient of $z^n$ in $\psi$ is an
off-diagonal element of a scalar matrix.
\end{proof}

From now on we will work with $q$-connections on rank $2$ vector bundle.
Denote by $M_{\lambda}$ the moduli space of $q$-connections of type
$\lambda=(a_1, \dots, a_k; b_1,\dots, b_l;\rho_1,\rho_2; \t, \t;2).$  From Lemma $\ref{irr}$ we see that $M_{\lambda}$ is empty unless 

\begin{equation}\label{f}
k=2n,
\end{equation}
\begin{equation}
\textup{deg}_q(\lambda)=q^k \text{, where } k \text{ is an integer }.
\end{equation}
Let us also consider the following assumptions on $\lambda$:
\begin{equation}\label{ir}
\frac{\rho_j\prod_{i\in I}a_i}{\t}\neq q^k \text{ for any }
I\subset\{1,\dots, k\},\text{  any } k\in \mathbb Z \text{ and }
j=1,2;
\end{equation}
\begin{equation}\label{l}
\rho_1,\rho_2\neq 0 \quad \theta_1=\theta_2\neq 0.
\end{equation}
Assumption ($\ref{ir}$) implies irreducibility of the $q$-connection
and can be used to prove that the moduli space $M_\lambda$ is a smooth
variety. In the present paper we do not work with the categorical
definition of the moduli space since it is not necessary for our purposes. 
However, we think that one can prove that the moduli space we consider
is the coarse moduli space following the arguments in \cite{AL}.

In what follows we will need more facts about the $q$-connections.

\begin{defi}
Let $\Lambda^{r}$ be the set of all q-connections satisfying \ref{f}--\ref{l}.
\end{defi}

\begin{defi}
Let $\L$ be a rank two vector bundle on $\P^1$, $x\in \P^1$ and $\ell\subset
\L_{x}$ a one-dimensional subspace. Denote by $L$ the sheaf of
sections of $\L$. An elementary lower modification of $\L$ at $x$
with respect to $\ell$ is a rank two
bundle $\hat \L$ whose sheaf of sections is $\hat L=\{s\in
L| s(x)\in \ell\}.$

We can dually define elementary upper modifications.
\end{defi}
Let us denote the set of poles and zeros of $\A$ by $\S(\A).$
\begin{lem} $($Lemma 1.7 \cite{BA2}$)$ Suppose that $q^{-1}x$ is not a
  singular point of $\A,$ while $x$ is a singular point of $\A$. Then there exists a unique modification
  $\A^{\{x\}}$ such that 
\begin{enumerate}[(i)]
\item $\S(\A^{\{x\}}) = (\S(A) \ {x}) \cup {q^{-1}x};$
\item $\A$ is the unique modification of $\A^{\{x\}}$ at $x$ with no
  singularity at $q^{-1}x$. 
\end{enumerate}
The dual statement also holds.
\end{lem}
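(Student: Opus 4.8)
The plan is to reduce the statement to a purely local computation near $x$ and its preimage $q^{-1}x$. The starting observation is that a modification acts on the matrix of the $q$-connection by the rational gauge formula $\hat A(z)=R(qz)A(z)R^{-1}(z)$ of \eqref{gauge1}, now with $R$ allowed zeroes and poles, and that away from the support of $R$ the map $\R$ is a regular isomorphism intertwining $\A$ and $\hat\A$. Since $\A(z)$ sends $\L_z$ to $\L_{qz}$, an elementary modification of $\L$ supported at a single point $p$ enters $\hat A$ only as a source, at $z=p$, and as a target, at $z=q^{-1}p$; hence modifying at $x$ disturbs the singular locus exactly along the pair $\{x,q^{-1}x\}$, which already matches the shift demanded in (i). By the remark preceding Lemma \ref{irr} (multiplication by a rational function exchanges a zero and a pole at a point), I would treat the case in which $x$ is a zero of $\A$ in detail; the case of a pole, and with it the dual statement, then follows by applying the same argument to $\A^{-1}$ viewed as a $q^{-1}$-connection.

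For existence, trivialize $\L$ near $x$ in a frame $e_1,\dots,e_m$ whose last vector $e_m$ spans $\ell=\ker A(x)\subset\L_x$, a line because $x$ is a simple zero. I would define $\A^{\{x\}}$ by the elementary modification of $\L$ at $x$ determined by $\ell$, represented in this frame by $R(z)=\textup{diag}(1,\dots,1,z-x)$, so that $\hat A(z)=R(qz)A(z)R^{-1}(z)$. Two local facts must then be checked. First, $A(z)R^{-1}(z)$ is regular at $x$ because $A(x)$ annihilates $e_m$, so its last column vanishes; and it is in fact invertible at $x$ exactly because the zero of $\d(A(z))$ at $x$ is simple. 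Concretely, writing $A(z)=A(x)+(z-x)A_1+\cdots$, the last column of $(AR^{-1})(x)$ equals $A_1e_m$, and simplicity of the zero is precisely the transversality $A_1e_m\notin\textup{im}\,A(x)$. Since $R(qz)$ is a regular isomorphism near $x$, this shows $\hat\A$ is regular and invertible at $x$, so the singularity there is gone. Second, $R(qz)$ drops rank at $z=q^{-1}x$, so $\d(\hat A)$ acquires a simple zero there while $\hat A$ stays regular; as $R$ is a regular isomorphism at every other point, $\S(\A^{\{x\}})=(\S(\A)\setminus\{x\})\cup\{q^{-1}x\}$, which is (i).

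For uniqueness, let $\mathcal B$ be any modification of $\A$ satisfying (i). To affect the singularity at $z=x$ at all, the underlying modification must be supported at $x$ (a modification supported at $qx$ would instead disturb the singularity at $qx$, contradicting (i)), so $\mathcal B$ is an elementary modification at $x$; and among these only the one along $\ell=\ker A(x)$ actually removes the singularity, since any other line leaves an uncancelled pole or zero of $\hat A$ at $x$. Hence $\mathcal B=\A^{\{x\}}$. Statement (ii) is the reverse construction: the dual elementary modification of $\A^{\{x\}}$ at $x$, taken with respect to the line that cancels the zero just created at $q^{-1}x$, returns $\A$, and the same transversality argument shows that this line, hence this modification, is the unique modification of $\A^{\{x\}}$ at $x$ producing no singularity at $q^{-1}x$.

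The step I expect to be the main obstacle is the bookkeeping of directions: correctly identifying the line $\ell$ at $x$ and, in the reverse step of (ii), the line at $q^{-1}x$, together with verifying the transversality (``simple zero'', respectively ``simple pole'') conditions that upgrade ``regular'' to ``regular and invertible'' at $x$. These conditions are exactly what pin the modification down and make both the existence in (i) and the uniqueness in (ii) go through. They also demand separate care at a point that is simultaneously a zero and a pole of $\A$ --- a situation explicitly allowed by the definition of $\S(\A)$ --- where the zero part and the pole part of the singularity must be modified independently.
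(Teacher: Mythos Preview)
The paper does not prove this lemma at all: it is quoted verbatim from \cite{BA2} (Lemma~1.7 there) and used as a black box, so there is no ``paper's own proof'' to compare against. Your approach---take $\ell=\ker A(x)$ when $x$ is a simple zero, perform the elementary modification represented in an adapted frame by $R(z)=\textup{diag}(1,\dots,1,z-x)$, and read off the shift of the singularity from the factorization $\det\hat A(z)=\frac{qz-x}{z-x}\det A(z)$---is exactly the standard argument and is what one finds in \cite{BA2}.

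Two small points worth tightening. First, your uniqueness step assumes implicitly that the modification is elementary and supported at the single point $x$; strictly speaking, Definition~\ref{mod} allows an arbitrary rational $\R$, so you should remark that any $\R$ satisfying (i) must be a regular isomorphism off $\{x\}$ (otherwise $\S$ changes elsewhere), hence is a composition of elementary modifications at $x$, and then your line-by-line argument applies. Second, the phrase ``modification supported at $qx$ would instead disturb the singularity at $qx$'' is slightly off: such a modification affects the connection at $qx$ and at $x$, not at $qx$ alone; the point is that it would alter $\S$ at $qx\notin\{x,q^{-1}x\}$ under the ambient hypothesis (used throughout Section~\ref{sec:3}) that no two singular points lie on the same $q$-orbit. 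These are bookkeeping refinements; the substance of your argument is correct.
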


\begin{lem}
Suppose $(\L,\A)\in M_\lambda$ for $\lambda =\lambda=(a_1,...,a_k;\rho_1,\rho_2;\t,\t;n)\in \Lambda^r$. Let $(\hat\L, \hat\A)$ be an elementary upper modification of $\L$ at $x\in \P^1$. Then the only cases when $(\hat\L, \hat\A)$ belongs to $M_{\hat\lambda}$ for some $\hat\lambda\in\Lambda^r$ are as follows:
\begin{enumerate}[(i)]
\item If $x=\infty$, then $\hat \lambda = (a_1,\cdots ,ak;q^{-1}\rho_1,\rho_2;\t,\t;n)$ or $\hat \lambda = (a_1,...,a_k;\rho_1,q^{-1}\rho_2;\t,\t;n).$
\item If $x=a_i$ is a zero of $\A$ and $q^{-1}x\neq a_j$ is not,\\ then $\lambda = (a_1,\cdots,q^{-1}a_i,\cdots,a_k;\rho_1,q\rho_1,\t,\t;n).$
In either case, the elementary modifications define an isomorphism $M_{\lambda}\rightarrow M_{\hat\lambda}$.
\end{enumerate}
\end{lem}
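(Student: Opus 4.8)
The plan is to read off everything from the induced connection together with the degree bookkeeping of Lemma \ref{irr}. Writing $\R\colon\L\to\hat\L$ for the elementary upper modification, the induced matrix is $\hat\A(z)=\R(qz)\A(z)\R(z)^{-1}$, and since $\R$ and $\R^{-1}$ are regular and invertible away from $x$, the zeroes and poles of $\hat\A$ can differ from those of $\A$ only at $x$ (through $\R(z)^{-1}$) and at $q^{-1}x$ (through $\R(qz)$). Because an upper modification raises the degree by one, $\textup{deg}(\hat\L)=\textup{deg}(\L)+1$, so Lemma \ref{irr} forces $\textup{deg}_q(\hat\lambda)=q^{-1}\textup{deg}_q(\lambda)$; this single identity is the consistency check I would run at the end of each case. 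Demanding $\hat\lambda\in\Lambda^r$ — only simple finite zeroes together with an order-$n$ pole at $\infty$, and semisimple $A_0,\hat A_n$ — combined with the localization above restricts the admissible $x$ to either $x=\infty$ (where $q^{-1}x=x$, so no finite datum moves) or $x=a_i$, a finite zero of $\A$ (where the pole of $\R(z)^{-1}$ can absorb the zero of $\A$). Every other position of $x$ leaves an uncancelled pole and is excluded.

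For $x=\infty$ I would work in the frame adapting $\A$ to $A_nz^n+\textup{O}(z^{n-1})$ near $\infty$, with $A_n$ semisimple. Since the modification is supported at $\infty$, the finite zeroes $a_j$ and the value $A_0$ are untouched, so only $(\rho_1,\rho_2)$ can move. An elementary upper modification along a line $\ell\subset\L_\infty$ changes the frame by $\textup{diag}(z,1)$ in a basis whose first vector spans $\ell$, and a direct expansion of $\R(qz)A(z)\R(z)^{-1}$ then shows the dichotomy: if $\ell$ is \emph{not} an eigenline of $A_n$ the surviving off-diagonal entry promotes the pole to order $n+1$, so $\hat\A\notin\Lambda^r$; if $\ell$ is the $\rho_j$-eigenline, the new leading coefficient is triangular with eigenvalues obtained from $(\rho_1,\rho_2)$ by multiplying $\rho_j$ by $q^{-1}$, and is semisimple as soon as $q^{-1}\rho_j\neq\rho_{3-j}$. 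This produces exactly the two types listed in (i), and one checks $\textup{deg}_q(\hat\lambda)=q^{-1}\textup{deg}_q(\lambda)$ directly.

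For $x=a_i$ the mechanism is complementary: the simple pole of $\R(z)^{-1}$ at $a_i$ must cancel the simple zero of $\A$ there, while $\R(qz)$ contributes a new simple zero at $q^{-1}a_i$. Tracking $\d(\hat\A)=\d\R(qz)\,\d\A\,\d\R(z)^{-1}$ shows the zero of $\A$ migrates from $a_i$ to $q^{-1}a_i$ and that this is the whole change of the finite singular locus. Because the modification is now supported at finite points, $\hat\L$ and $\L$ agree in a neighborhood of $\infty$ and $\hat\A=\A$ there, so the germ at $\infty$ — hence $\hat A_n$, $A_0$, and with them $(\rho_1,\rho_2)$ and $(\t,\t)$ — is read off from the unchanged data and is preserved; the factor $q^{-1}$ in $\textup{deg}_q$ is then accounted for entirely by the shift $a_i\mapsto q^{-1}a_i$. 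This pins down the type in (ii). The delicate point here is that the cancellation at $a_i$ is clean, leaving $\hat\A$ regular and invertible there, only when $\ell$ is the distinguished line determined by the rank-one map $A(a_i)$ (its kernel/image data); a generic $\ell$ leaves a residual pole and violates $\hat\lambda\in\Lambda^r$, which is what singles out the unique admissible modification.

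Finally, to upgrade each admissible modification to an isomorphism $M_\lambda\to M_{\hat\lambda}$, I would note that an elementary upper modification is invertible, its inverse being an elementary lower modification of the same kind, so the two are mutually inverse morphisms of moduli spaces; the uniqueness furnished by the preceding lemma (Lemma 1.7 of \cite{BA2}) guarantees these are genuine families rather than pointwise assignments. The step I expect to be the main obstacle is the local analysis at $x=a_i$: identifying the precise subspace $\ell$ for which the pole of $\R^{-1}$ annihilates the zero of $\A$ without creating a new pole, and confirming that $A_0$ and $\hat A_n$ remain semisimple so that $\hat\lambda$ truly lies in $\Lambda^r$. The companion subtlety in case (i) — verifying that every non-eigenline direction really does raise the pole order to $n+1$ — is of the same local nature and should yield to the same expansion.
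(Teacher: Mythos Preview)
The paper states this lemma without proof; it is the $q$-analogue of the corresponding statement for $d$-connections in \cite{BA1}, and is treated as routine. Your argument is the natural one and is essentially correct: localize the possible change in $\S(\A)$ to $\{x,q^{-1}x\}$ via $\hat\A(z)=\R(qz)\A(z)\R(z)^{-1}$, use Lemma \ref{irr} to force $\textup{deg}_q(\hat\lambda)=q^{-1}\textup{deg}_q(\lambda)$, and then do the local expansion at the two admissible positions.

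Two small points are worth tightening. First, your clause ``every other position of $x$ leaves an uncancelled pole'' does not dispose of $x=0$ correctly. Since $0$ is a fixed point of $z\mapsto qz$, the rank drop of $\R(qz)$ and the pole of $\R(z)^{-1}$ sit at the same point, and the honest computation (parallel to your $x=\infty$ case, with $\R=\textup{diag}(1,z)$ in an adapted basis) gives $\hat A_0$ upper triangular with eigenvalues $\t$ and $q\t$. So $x=0$ is excluded not by a residual pole but because it breaks the scalar condition $\t_1=\t_2$ in \eqref{l}. Second, in case (ii) your sentence reads as if $A_0$ is determined by ``the germ at $\infty$''; what you mean, and what is true, is that the modification is supported away from both $0$ and $\infty$, so the germs at each are separately unchanged.

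Finally, note that your case (ii) analysis yields $\hat\lambda=(a_1,\dots,q^{-1}a_i,\dots,a_k;\rho_1,\rho_2;\t,\t;n)$ with $(\rho_1,\rho_2)$ untouched, and your degree check confirms this. The ``$\rho_1,q\rho_1$'' printed in the statement is evidently a typo; your derivation is the right one.
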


\subsection{Geometric description of $M_{\lambda}$}

We will focus on $q$-connections of type
$$\lambda=(a_1,...,a_k;u,qv;w,w;3)\in \Lambda^{r}$$ on $ \L\simeq \O\oplus
\O(-1).$ This choice is dictated by appearance of this type of
connections in our analysis of the $q$-Hahn ensemble. 

Let us choose an isomorphism $\mathcal S\colon\L\simeq \O\oplus \O(-1).$ Then $\A$ induces a $q$-connection of type $\lambda$ on $\O\oplus \O(-1).$ Such a $q$-connection is represented by a matrix of the following form 
\begin{equation}\label{A}
A=\left [\begin{array}{cc} a_{11} & a_{12}\\ a_{21} & a_{22}
\end{array}\right],
\quad a_{11}, a_{22}\in\Gamma(\P^1, \O(3)), \quad a_{12}\in \Gamma(\P^1,
\O(4)), \quad a_{21}\in \Gamma(\P^1, \O(2)). \nonumber
\end{equation}

This representation is not unique: $\mathcal S$ can be composed with an automorphism of the bundle. Such an automorphism can be written as a matrix

\begin{equation}\label{gauge}
R=\left [\begin{array}{cc} r_{11} & r_{12}\\ 0 & r_{22}
\end{array}\right],\\
\quad r_{11}, r_{22}\in \mathbb C-\{0\}, \quad r_{12}\in \Gamma(\P^1, \O(1)).
\end{equation}

If we choose another trivialization $\mathcal S$ then $A$ is replaced
by the $q$-gauge transformation 

$$R(qz)A(z)R(z)^{-1}.$$

\begin{lem}\label{cl}
Let $\A$ be a $q$-connection on $\O\oplus \O(-1)$ and let its matrix
be of the form $\eqref{A}.$ Denote $$ S(z)=\left [\begin{array}{cc} 1 & 0\\ 0 & \frac{1}{z}\end{array}\right ].$$
We claim that $\A$ is of type $\lambda=(a_1,...,a_k;u,qv;w,w;n)$ if and only if $A$ satisfies the following conditions
\begin{equation}
\d A(z)=uv(z-a_1)(z-a_2)(z-a_3)(z-a_4)(z-a_5)(z-a_6), \nonumber
\end{equation}
\begin{equation}
\d (S(qz)^{-1}A(z)S(z))=quvz^6+\mathcal{O}(z^5)\text{  and  } \T
(S(qz)^{-1}A(z)S(z))=(u+qv)z^3+\mathcal O(z^2), \nonumber
\end{equation}
$$A(0) \text{ has an eigenvalue } w \text{ of degree } 2.$$
\end{lem}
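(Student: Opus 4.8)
The plan is to translate each clause in the definition of type $\lambda$ into an explicit condition on the entries of $A$. The first point is that the shape \eqref{A} already builds in two of the requirements: a $q$-connection $\A\colon\L\to s^{*}\L$ with $s(z)=qz$ that is regular on $\mathbb A^1$ and has a pole of order at most $3$ at $\infty$ is exactly a global section of $\mathcal{H}om(\L,s^{*}\L)\otimes\O(3\cdot\infty)$; for $\L=\O\oplus\O(-1)$ one has (using $s^{*}\L\cong\L$) that $\mathcal{H}om(\O\oplus\O(-1),\O\oplus\O(-1))=\left[\begin{array}{cc}\O & \O(1)\\ \O(-1) & \O\end{array}\right]$, and twisting by $\O(3)$ forces the four entries to have degrees $3,4,2,3$. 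So once $A$ has the form \eqref{A}, only three things remain to be matched with the definition: the location and order of the zeroes, the leading term at $\infty$, and the fibre $A_0$.

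For the zeroes I would use that, by the definition of a zero, $z_0\in\mathbb A^1$ is a (simple) zero of $\A$ precisely when $\A$ is regular there and $\d A$ vanishes to first order at $z_0$. Since $\d A=a_{11}a_{22}-a_{12}a_{21}$ has degree at most $6$, requiring simple zeroes exactly at the distinct points $a_1,\dots,a_6$ and nowhere else in $\mathbb A^1$ is equivalent to $\d A(z)=C\prod_{i=1}^{6}(z-a_i)$ with $C\neq0$; the value $C=uv$ will be pinned down by the infinity analysis.

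The core is the behaviour at $\infty$, which is what the conjugation by $S$ records. As $S$ is a frame for $\O\oplus\O(-1)$ near $\infty$, the matrix of $\A$ in that frame is $B(z):=S(qz)^{-1}A(z)S(z)=\left[\begin{array}{cc} a_{11} & a_{12}/z\\ qz\,a_{21} & q\,a_{22}\end{array}\right]$, all of whose entries are $\textup{O}(z^3)$, with leading coefficient $A_3=\left[\begin{array}{cc}\alpha_{11}&\alpha_{12}\\ q\alpha_{21}& q\alpha_{22}\end{array}\right]$, where $\alpha_{ij}$ is the top coefficient of $a_{ij}$. A direct computation gives $\d B=q\,\d A$ and $\T B=a_{11}+q\,a_{22}$, so the displayed trace and determinant conditions say exactly that $\T A_3=u+qv$ and $\d A_3=quv$, i.e.\ the characteristic polynomial of $A_3$ is $(x-u)(x-qv)$; comparing the $z^6$-coefficients at the same time forces $C=uv$. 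Under the genericity assumption $u\neq qv$ (with $u,qv\neq0$ from \eqref{l}) the two eigenvalues are distinct, so $A_3$ is automatically semisimple and invertible. This is condition (ii), realized by the frame $S$, and invertibility of $A_3$ shows in addition that the pole at $\infty$ has order exactly $3$ and that $\A$ has no zero there. In the forward direction one uses that the trace and determinant of the leading term are unchanged if $S$ is replaced by any other formal frame at $\infty$ (they are conjugated by the invertible constant part of the frame change), so they may be computed from $B$.

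Finally, since the frame in which $A$ is written is regular at $z=0$ (the factor $S$ only alters the frame near $\infty$), the fibre $A_0=\A|_{z=0}$ equals $A(0)$, and condition (iii) asks that it be semisimple with eigenvalues $w,w$; for a $2\times2$ matrix this is the same as $A(0)=wI$, which is what ``$A(0)$ has the eigenvalue $w$ of degree $2$'' means once semisimplicity is understood. I expect the main obstacle to be exactly the bookkeeping at $\infty$: keeping track of how the twist by $S$ simultaneously shifts the degrees of the entries and inserts the factors of $q$, so that the eigenvalues of the leading term emerge as $u$ and $qv$ rather than $u$ and $v$, and checking that $S$ really is an admissible formal frame so that the gauge-invariant trace and determinant of $A_3$ may be read off from $B$. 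A useful consistency check is that the single factor of $q$ in $\d B=q\,\d A$ is exactly the one predicted by Lemma \ref{irr} for the induced $q$-connection $\d\A$ on $\d\L=\O(-1)$.
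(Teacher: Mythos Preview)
Your argument is correct and is exactly the direct translation of the definition of ``type $\lambda$'' into matrix conditions that the paper leaves implicit (the lemma is stated without proof there). The key points---that $S$ trivializes $\O\oplus\O(-1)$ near $\infty$, that $S(qz)^{-1}A(z)S(z)$ has all entries of degree at most $3$ with leading matrix having trace $u+qv$ and determinant $quv$, and that $A(0)=wI$ encodes the semisimplicity condition at $0$---are handled cleanly, and the bookkeeping with the extra factor of $q$ is right.
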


We can now think of $M_{\lambda}$ as a quotient of all matrices that
satisfy Lemma $\ref{cl}$ modulo $q$-gauge transformations $\eqref{gauge}.$ 

\begin{rem}
In \cite{Sak2} Sakai considered the matrices of the form
$$A(z)=A_0+A_1z+A_2z^2+A_3z^3,$$
where
$$A_3=\left [\begin{array}{cc}\kappa_1 & 0\\ 0 & \kappa_2
\end{array}\right],\quad \d A(z)=\kappa_1\kappa_2(z-a_1)(z-a_2)(z-a_3)(z-a_4)(z-a_5)(z-a_6).$$
He also imposed the condition $q\kappa_1=\kappa_2.$ Sakai considered the
modifications that shift
$$a_1\rightarrow qa_1,\text{ }a_2\rightarrow qa_2\text{ and }
\theta_1\rightarrow q\theta_1, \text{ }\theta_2\rightarrow
q\theta_2, $$
where $\theta_1,$ $\theta_2$ are the eigenvalues of $A_0.$
\end{rem}

\begin{thm}\label{Sak}
The smallest compactification of $M_{\lambda}$ is isomorphic to the
surface of type
$A^{(1)}_{2}$ in Sakai's classification.
\end{thm}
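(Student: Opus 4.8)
The plan is to introduce explicit birational coordinates on $M_\lambda$, resolve the natural rational map from an affine model to obtain the smallest smooth compactification, and then recognize the resulting surface in Sakai's list \cite{Sak} by computing the type of its anticanonical divisor. By Lemma \ref{cl}, a point of $M_\lambda$ is a matrix $A(z)$ of the form \eqref{A} subject to the stated determinant, trace, and $A(0)=wI$ conditions, taken modulo the gauge action \eqref{gauge}. First I would use the gauge freedom: the $(2,1)$-entry transforms by a nonzero scalar, so the zero set of $a_{21}$ is gauge-invariant, and since $A(0)=wI$ forces $a_{21}(0)=0$ I may write $a_{21}(z)=c\,z(z-t)$. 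This singles out the apparent (second) zero $t$ as a well-defined function on $M_\lambda$, exactly the first coordinate of Theorem \ref{qPV}. Using the remaining gauge parameters ($r_{11}/r_{22}$ together with the two coefficients of $r_{12}$) I would reduce $A(z)$ to a normal form and then solve the defining equations, expressing every entry as a rational function of $t$, the second coordinate $r$ of Theorem \ref{qPV}, and the fixed type parameters $a_1,\dots,a_6,u,v,w$. This exhibits $M_\lambda$ as a dense open subset of the $(t,r)$-plane, hence of its closure in $\P^1\times\P^1$.

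Next I would determine which loci must be modified to reach the smallest smooth compactification. On the $(t,r)$-model the birational map to $M_\lambda$ and its inverse degenerate where the apparent singularity $t$ collides with one of the prescribed points $a_1,\dots,a_6$ or with $0,\infty$, and along the curves where the underlying bundle jumps away from $\O\oplus\O(-1)$; these are the points of indeterminacy. I would locate the base points explicitly — they are pinned down by the singularity data, namely the six zeros together with the scalar value $w$ prescribed at $0$ and the eigenvalues $u,qv$ of the leading term at $\infty$ — and show that blowing them up, in the order dictated by their infinitely-near structure, separates the competing curves and removes all indeterminacy. The output is a smooth rational surface, which I would identify as $\P^2$ blown up at nine points with $K^2=0$.

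Finally I would compute the anticanonical divisor of this surface and read off the Dynkin type of its components. With the nine blow-up points and the strict transforms of the distinguished curves in hand, I would exhibit the unique effective anticanonical divisor $-K$ and compute the intersection matrix of its irreducible components, the aim being to show that this configuration is the affine diagram $A_2^{(1)}$. Since Sakai's list is a classification, rationality together with $K^2=0$ and this anticanonical type identifies the surface uniquely. I expect the decisive feature forcing this particular, non-generic type rather than the generic $A_0^{(1)}$ surface (symmetry $E_8^{(1)}$) to be the requirement that $A_0=A(0)$ be \emph{scalar}, not merely to have a repeated eigenvalue: this degeneracy drives several of the nine points together and collapses $-K$ onto a cycle of three rational curves.

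The main obstacle is the middle step — identifying all of the base points and verifying that after the blow-ups the configuration of negative curves is exactly $A_2^{(1)}$ and no more degenerate. The coincidences imposed by the scalar condition at $0$ and by the trace and determinant normalizations at $\infty$ produce infinitely-near points and tangential contacts that must be blown up in the correct order; getting this combinatorial bookkeeping right, rather than any single determinant computation, is what pins down the Sakai type and thereby proves the theorem.
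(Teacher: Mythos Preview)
Your proposal is correct and follows essentially the same route as the paper. Two small differences are worth noting. First, the paper takes as its second coordinate the linear coefficient $s$ of $a_{11}$ in the normal form with $a_{21}=z(z-t)$ (so $a_{11}=sz+w$), and only introduces $r$ at the very end to check that it is regular on the blown-up surface; working with $s$ makes the location of the base points immediate. Second, the paper builds the compactification on $\P^1\times\P^1$ blown up at eight points --- six at $(t,s)=(a_i,-w/a_i)$, coming from the common factor $ts+w$ in the equations for the remaining matrix entries, and two over $t=\infty$ at $y=qv$ and $y=u$ in the chart $(x,y)=(1/t,\,s/t^2)$ --- rather than on $\P^2$ blown up at nine. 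Under the genericity hypotheses \eqref{ir}--\eqref{l} these eight centers are simple and pairwise distinct, so the infinitely-near points and tangential contacts you anticipated do not in fact occur; the $A_2^{(1)}$ type is forced by the special position of the centers (six of them on the $(1,1)$-curve $ts+w=0$), not by any collision.
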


\begin{proof}
Denote by
$t\in \P^1$ the second root of $a_{21}$ (note that zero is always a
root of $a_{21}$). From the irreducibility assumption it follows that
this coordinate is correctly defined. The direct computation shows
that $t$ is invariant under $q$-gauge transformations.

It is convenient to reduce the matrices of a $q$-connection $\A$ of type
$\lambda$ to
some normal form.  Solving a system of linear equations for the
coefficients of $R$ we can reduce $A(z)$ to the following form (we assume $t\in \P^1-\{\infty\}$)

\begin{equation}\label{normform}
\left [\begin{array}{cc} sz+w & h_4z^4+h_3z^3+h_2z^2+h_1z\\ z(z-t) & k_3z^3+k_2z^2+k_1z+w\end{array}\right ].
\end{equation}

Then $s$ can be considered as the second coordinate in this chart. 

Since $\A$ is of type $\lambda,$ we can find all
the coefficients of its matrix in terms of $s$ and $t$. We get 
\begin{enumerate}
\item $ h_1(ts+w)=F_{h_1}(t, s),$
\item $h_2=F_{h_2}(t,s),$
\item $h_3=F_{h_3}(t,s),$
\item $k_1=F_{k_1}(t,s),$
\item $k_2(ts+w)=F_{k_2}(t,s),$
\end{enumerate}
where $F_{-}$ is a polynomial.
Moreover, 
$$F_{h_1}(t,-\frac{w}{t})=-\frac{vw^3\prod^6_{i=1}(t- a_i)}{t^3}\quad
\text{and} \quad  F_{k_2}(t,-\frac{w}{t})=-\frac{vw^2\prod^6_{i=1}(t- a_i)}{t^2}.$$

Thus, in order to resolve the singularities we need to blow-up in six
points $(a_i, -\frac{w}{a_i}).$

Let us work with $t$ in the neighborhood of $\infty.$ Let us denote
$\frac{1}{t}$ by $x.$ In this chart we can reduce the matrix of the
$q$-connection to
another normal form 

\begin{equation}\label{normform}
\left [\begin{array}{cc} yz^3+w & h_4z^4+h_3z^3+h_2z^2+h_1z\\ z(1-xz) & k_3z^3+k_2z^2+k_1z+w\end{array}\right ].
\end{equation}

Let $y$ be the second coordinate.
Again, since $\A$ is of type $\lambda,$ we can find all
the coefficients of its matrix in terms of $x$ and $y$. We get

\begin{enumerate}
\item $ h_1x(x^3w+y)=G_{h_1}(x, y),$
\item $h_2(x^3w+y)=G_{h_2}(x,y),$
\item $h_3(x^3w+y)=G_{h_3}(x,y),$
\item $k_1(x^3w+y)=G_{k_1}(x,y),$
\item $k_2x(x^3w+y)=G_{k_2}(x,y),$
\end{enumerate}
where $G_{-}$ is a polynomial.

Moreover, $$
G_{h_1}(0,y)=wq^{-1}(qv-y)(u-y),\quad
G_{k_2}(0,y)=q^{-1}(qv-y)(u-y).$$
Notice that in the neighborhood $x=0$ $(t=\infty)$ the coordinate $y$ can
not be zero due to our assumptions about the behavior of the
$q$-connection at $\infty.$ 
Therefore, in order to resolve the singularities we need to blow-up in two points $(0,
qv)$ and $(0, u)$.

We covered $M_\lambda$ by two charts $\mathbb
A^{1}\times \mathbb P^{1}-\{\infty\}$ and $\mathbb
A^{1}\times \mathbb P^{1}-\{0\}$ blown-up in a number of points. We can
glue them together along their intersection $q\neq 0$, $\infty.$ The
transition map is $y\rightarrow s=yt^2.$ 

Let us recall that we would like to introduce the following coordinate
on $M_\lambda$
$$r=\frac{w(t-a_3)(t-a_4)(t-a_5)(t-a_6)}{a_{11}(t)a_3a_4a_5a_6t}-\frac{1}{t}.$$
We see that the rational map given by this coordinate
$r\colon M_\lambda \rightarrow \P^1$ is actually regular, since we
have already resolved its singularities.

Summing up, we have realized
$M_\lambda$ as an open subscheme of the blow-up of $\P^1\times \P^1$ in eight points; more
precisely, we can check by direct computation that its compactification is isomorphic to $A^{(1)}_{2}.$
\end{proof}
\section{Proof of Theorem $\ref{qPV}$}
\label{sec:4}
\subsection{q-Difference $P_V$}

In this section we discuss the proof of Theorem $\ref{qPV}$. It
is based on calculations which are hard but straightforward. The
trickiest part is the introduction of the coordinate $r$. 

\begin{proof}
Take $(\L,\A)\in M_{\t}$ and set $(\hat\L,\hat\A)=\textup{qPV} (\L,\A)$. Note that it suffices to check the formulas ($\ref{recc1}$) on a dense subset of $M_{\t}$; we can therefore assume
that $t((\L,\A))\neq 0, \infty$. Then there exists an isomorphism
$\mathcal S\colon\O\oplus\O(-1) \xrightarrow{\sim}\L$ such that
the matrix of $\A$ relative to $\mathcal S$ is of the form 

\begin{equation} \label{matrixform}
A(z)=\left [\begin{array}{cc}s z+w & h_4 z^4+h_3 z^3+h_2z^2+h_1z\\ z(z-t) & k_3z^3+k_2z^2+k_1z+w
\end{array}\right].
\end{equation}

 Let us introduce the coordinate $p$ such
that $s=\frac{p-w}{t}$. An analogue of this coordinate was used in
\cite{JS}, \cite{BA1}, \cite{JM2}, \cite{Sak2}. The specification of the behavior at infinity implies that
$k_3=qu+w$ and $h_4=-uv$. Moreover,
$$\d(A(z))=uv(z-a_1)(z-a_2)(z-a_3)(z-a_4)(z-a_5)(z-a_6),\quad
uv\prod_{i=1}^{6}a_i=w^2.$$ This information allows us to express
$h_3$, $h_2$, $h_1$ and $k_2$, $k_1$ as rational functions in $t$ and
$p$. 

By the definition of $\textup{qPV}$, the matrix $\hat A$ is the $q$-gauge
transformation of $A$:

$$\hat A(z)=R(qz)A(z)R^{-1}(z),$$

where $R$ is of the form 

$$R(z)=\left [\begin{array}{cc}\a_1 z+\a_0 & \b_2 z^2+\b_1 z+\b_0\\
    \gamma_0 & \delta_1 z+\delta_0
\end{array}\right],$$

such that $$\d(R(z))=c (z-a_1)(z-a_2).$$

Also, we know that $\hat A$ has no singularities at $a_1$ and $a_2.$
These restrictions yield polynomial equations on the coefficients of the gauge
matrix. The resulting system determines $R$ up to a multiplicative
constant and we obtain expressions for $\hat t$ and $\hat p$ in terms
of $t$, $p$. The final step is the introduction of the coordinate $r$.

We knew form the geometric considerations (Theorem $\ref{Sak}$) that
we should obtain the asymmetric q-Painlev\'e V. Thus,
we expected the existence of the
coordinates such that the formulas can be written in a nice multiplicative
form. We
investigated properties of the analogue of $r$ in the case of $d$-connections
and then came up with the formula for $r$ in the case of
$q$-connections. First, we were able to express $\hat t$ as a rational function
in $t$ and $p$. The next step was to factor the denominator of this
expression and compute the residues. After some work we obtained the
right candidate for the coordinate $r.$ 

\end{proof}

\begin{rem}
Note that $\hat A(z)=R(qz)A(z)R^{-1}(z)$ is the compatibility
condition for the following system of equations
$$Y(qx)=A(x)Y(x) \text{--- linear } q \text{-difference equation,}$$
$$\hat Y(x)=R(x)Y(x)\text{ --- the deformation equation}.$$
Thus, we have expressed the asymmetric $q$-Painl\'eve V equation in a
Lax pair formalism. 
\end{rem}
\section{Computing the ratios of the $\tau$-function}
\label{sec:5}
\subsection{$\tau$-function}
Following \cite{BA2}, we introduce the notion of the $\tau$-function for a vector bundle.

Let $\d(V)$ denote the top exterior power of a finite dimensional
vector space $V$ and $V^{-1}$ denote its dual. Let us also define a
one-dimensional vector space 

$$\dr(\L)=\d(H^0(\P^1,\L))\otimes ( \d(\H^1(\P^1,\L)))^{-1}.$$

\begin{defi}
Suppose that a rank $m$ vector bundle $\L$ is of degree $-m$. We define $\tau(\L)\in \dr(\L)^{-1}$ by
$$
\tau(\L) = \left\{
        \begin{array}{ll}
            1, & \quad \L \simeq \O(-1)^m; \\
            0, & \quad \text {otherwise}.
        \end{array}
    \right.
$$

\end{defi}

By itself
$\tau(\L)$ contains almost no information, let us show how we can
interpret the ratios of the $\tau$-function in some specific situations.

\begin{ex}\label{iso}
Let us consider a vector bundle $\L$ with slope $-1$ and its upper
modification $\hat\L$ such that $\H^{1}(\P^1,\hat\L)$ vanishes. 
Then the long exact cohomology sequence for 
$$0\rightarrow\L\rightarrow\hat\L\rightarrow \hat\L/ \L\rightarrow 0$$
provides an identification $\dr(\L)=\d(H^0(\P^1,\hat\L)\otimes (
\d(\H^1(\P^1,\hat\L/\L )))^{-1}.$ If we now consider the
morphism $${\n} \colon
\H^0(\P^{1},\hat \L)\rightarrow \H^0(\P^{1},\hat \L/\L),$$
then $\tau(\L)$ can be identified with $\d(\n)$. 
\begin{rem}
Since the slope of $\L$ is $-1,$ we see that $\H^0(\P^{1},\L)=0$ if and
only if $\L=\O(-1)^m$.
\end{rem}
\end{ex}

\begin{ex} Let $\L_2$ be a
modification of $\L_1$. We assume that $\L_1$ is isomorphic
to $\O(-1)^m$ and $\L_2$ is of slope $-1$.  One can always find an
upper modification $\hat\L$ of $\L_1$ that is also an upper
modification of $\L_2$. Then the ratio
$$\frac{\tau(\L_2)}{\tau(\L_1)}\in\dr(\L_1)\otimes\dr(\L_2)^{-1}$$ corresponds
to the determinant of the composition
$$ \H^0(\P^{1},\hat \L/\L_1)\xrightarrow{\sim}  \H^0(\P^{1},\hat \L) \rightarrow \H^0(\P^{1},\hat \L/\L_2).$$
\end{ex}

We will see below that if the vector bundles are equipped with
$q$-connections then the second ratio ("second logarithmic
derivative") of the $\tau$-function makes sense as a number.

\subsection{General $q^{-1}$-connections.}
\begin{rem}
We will be using $q^{-1}$-connections in the rest of the paper to make
all the computations parallel to the ones done in \cite{BA2}. 
\end{rem}
Let $\L=\O(-1)^{m}$ be a vector bundle and $\A$ be a $q^{-1}$-connection on
$\L$. Suppose that $\A$ has singularities at $n$ distinct points
$a_1$, $\dots, a_n$ and no singularities at $q^ka_i$ for $k\in \mathbb
Z-\{0\}$, $i=1,\dots, n$. We impose no restrictions on behavior of
$\A$ elsewhere. 

Consider 

$$\Omega\colon=\{\a_u=(q^{-u_1}a_1, \dots, q^{-u_n}a_n)| u=(u_1, \dots,
  u_n)\in \mathbb Z^{n}-\{0\}\}\subset \mathbb C^n.$$

Let $\a_u\in \Omega.$ We claim that there exists a unique
modification $\L_{u}$ of $\L$ such that 

$$\S(\A_u) =\S(\A)\setminus \{a_1,\dots, a_n\}\cup\{q^{-u_1}a_1,\dots, q^{-u_n}a_n\}. $$
This can be done by a sequence of elementary upper and lower modifications. 
Let us set 

$$S_u^{(i)}\colon=\dr(\L_u)^{-1}\otimes \dr(\L_{u-e_i}).$$

\begin{prop} \label{id} For any $u$, $v\in \Omega$ there exists a canonical
  isomorphism $S^{(i)}_{u}\simeq S^{(i)}_v.$

\begin{proof}
The construction and the proof tautologically repeats
\cite{BA2}(Proposition 2.1).
\end{proof}

\end{prop}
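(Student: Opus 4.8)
The plan is to prove Proposition \ref{id} by constructing the isomorphism $S^{(i)}_u \simeq S^{(i)}_v$ explicitly from the $q^{-1}$-connection $\A$ itself. The key observation is that $\A$, being a $q^{-1}$-connection on $\L$, carries each fiber $\L_z$ to $\L_{q^{-1}z}$, and hence induces rational maps between the various modifications $\L_u$. In particular, for each index $i$, the connection $\A$ should furnish a canonical rational isomorphism $\L_{u-e_i} \to \L_{u}$ (or its appropriate twist), because passing from $\a_u$ to $\a_{u-e_i}$ shifts the $i$-th singularity by exactly one power of $q$, which is precisely the shift built into the definition of a $q^{-1}$-connection. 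The map $\A$ is regular and invertible away from the singular locus, and by hypothesis the points $q^k a_i$ for $k \neq 0$ are nonsingular, so these induced maps are modifications supported on a controlled finite set.

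First I would make precise how $\A$ identifies the line $\dr(\L_u)$ with $\dr(\L_{u-e_i})$ up to the contribution of $\A$ at the moving singular point. Because $\dr(\L) = \d(\H^0) \otimes \d(\H^1)^{-1}$ is functorial under rational isomorphisms of bundles that are regular isomorphisms away from a finite set, the determinant of the induced cohomology map gives a canonical element of $S^{(i)}_u = \dr(\L_u)^{-1} \otimes \dr(\L_{u-e_i})$. The crucial point is that this construction does not actually depend on the global position $u$ but only on the local datum at the site $a_i$ being shifted, since all the other singularities of $\A_u$ and $\A_{u-e_i}$ coincide and the connection acts identically there. This locality is what produces a \emph{canonical} trivialization, and comparing the trivializations for two different basepoints $u$ and $v$ yields the desired isomorphism $S^{(i)}_u \simeq S^{(i)}_v$.

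Concretely, I would follow the strategy of \cite{BA2}(Proposition 2.1): build a commuting diagram of bundles $\{\L_u\}_{u \in \Omega}$ with edges given by the elementary modifications and by the maps induced by $\A$, verify that each square commutes up to the scalar coming from $\A$ acting at the relevant point, and then show that the composite scalars telescope so that the resulting identification of $S^{(i)}_u$ with $S^{(i)}_v$ is independent of the path chosen in the lattice $\Omega$ from $u$ to $v$. Path-independence reduces to checking commutativity on each elementary square, which in turn follows from the fact that elementary modifications at distinct points commute and that $\A$ is a single globally defined operator compatible with all of them simultaneously.

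The main obstacle I expect is verifying the path-independence, that is, that the canonical isomorphisms are compatible around the two-dimensional faces of the lattice $\Omega$ so that composing them along any path between $u$ and $v$ gives the same answer. This is the content that makes the isomorphism genuinely canonical rather than merely existent, and it requires a careful bookkeeping of how $\A$ and the elementary modifications interact at the several singular points being shifted independently. Since the paper explicitly states that ``the construction and the proof tautologically repeats \cite{BA2}(Proposition 2.1),'' I would expect that once the $q^{-1}$-connection analogue of the elementary-modification formalism (the $q$-version of Lemma 1.7 of \cite{BA2}, recorded earlier in the excerpt) is in place, the verification goes through verbatim, with all the genuinely new work having already been absorbed into setting up the $q$-connection framework.
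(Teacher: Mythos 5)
Your proposal is correct and follows essentially the same route as the paper, whose proof is simply a deferral to \cite{BA2} (Proposition 2.1): the canonical isomorphism is built from the connection $\A$ itself, using functoriality of $\dr$ under modifications to reduce $S^{(i)}_u$ to local skyscraper data at the $i$-th singular site, which $\A$ transports between lattice points, with canonicity verified by consistency on elementary squares. Your reconstruction of that construction, including the reduction to locality at $a_i$ and the path-independence check, is exactly the content being imported.
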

Let us now define the ratios of $\tau$. Set
$$\mathbb Z_0^{n}=\{t=(t_1,\dots, t_n)| \sum_{i=1}^{n}
\kappa_i s_i=0\},$$
where $\kappa_i$ are the degrees of the singularities $a_i$ of $\A.$

For any $a_{u}\in \Omega$ such that $\L_{u}$ is of slope -1 we see
that $\L_{u+s}$ is also of slope $-1$ for $s=(s_1,\dots, s_n)\in \mathbb
Z_0^{n}$. Once $t$ is fixed, ``the second logarithmic derivative'' of the $\tau$-function
$$
\frac{\tau(\L_{u+t+s})}{\tau(\L{u+t})} \cdot
\frac{\tau(\L_{u})}{\tau(\L_{u+s})}  \in \bigotimes
(S^{(i)})^{\otimes s_i} \otimes \bigotimes(
(S^{(i)})^{\otimes s_i})^{-1}
$$
makes sense as a number.
Here we used Proposition \ref{id} to be able to drop the subscript of $S^{(i)}$.

\subsection{Computations in coordinates for connections with simple zeros.}
Now let us make the construction above more explicit. All the
computations are parallel to the ones in \cite{BA2}.

We will present the result for one special case which we will use later.

Let us fix an isomorphism $\L\simeq \O(-1)\oplus\O(-1).$ Then the
$q^{-1}$-connection $\A $ is given by its matrix $A(z).$
Suppose $A$ has singularities at $a_1, \dots, a_n$ and
no singularities at $q^ka_i$ for any $k\in\mathbb Z - \{0\}.$
The singularities at $a_1,\dots, a_N$ are simple zeroes and poles.
Let $a_1$ be a simple zero and $a_2$ a simple pole.

For generic $A(z)$,
there exists a unique rational matrix $R(z)$ with the following
properties

\begin{itemize}
\item $\hat \A(z)=R(q^{-1}z)A(z)R^{-1}(z)$ has the same singularity
  structure as $A(z)$ with singularities at $\hat a_1=qa_1, \hat
  a_2=qa_2, \hat a_3=a_3, \dots, \hat a_n=a_n;$
\item $R(\infty)=I;$
\item Both $R(z)$ and $R^{-1}(z)$ have singularities only at $a_1$ and $a_2.$
\end{itemize}

It is not hard to construct such a matrix $R(z).$ Let us choose a
basis $w$ in the kernel $A(a_1)$ and a basis $w\textprime$ in the image of
$\res_{a_2} A^{T}(z)$, by $A^T$ we denote the matrix transpose. Then 

$$R(z)=I+\frac{R_0}{z-a_2}, \quad R^{-1}(z)=I-\frac{R_0}{z-a_1},
\quad \d(R(z))=\frac{z-a_1}{z-a_2}, $$
$$R_0=\frac{(a_2-a_1)w\cdot w_w^T}{\langle w, w\textprime\rangle}.$$

The choice of $w$ and $w\textprime$ for all singular points of $A(z)$
determines bases in the corresponding spaces for all
deformations. More precisely, if $\hat {\hat A}(z)$ is the next
deformation such that $$(\hat a_1, \hat a_2, \dots, a_n)\rightarrow
(q\hat a_1, q\hat a_2, \dots, a_n),$$
$$\hat w=R(qa_1)A^{-1}(qa_1)w, \quad \hat w\textprime=w\textprime^T
A(qa_2)R^{-1}(qa_2).$$

Now the second ratio of the $\tau$- function is equal to 
$$\frac{\langle \hat w, \hat w\textprime\rangle}{\langle w, w\textprime\rangle}$$
and does not depend on the choices we made.

\begin{thm}\label{tau}
 Consider $q^{-1}$-connections on $\L=\O(-1)\oplus\O(-1)$ of
type
$$\lambda=(a_1, a_3, a_5;a_2, a_4, a_6; u, q^{-1}v; w, w;3).$$
Then their matrices can be written in the following form
\begin{equation} \label{A}
A=\frac{1}{(z-a_2)(z-a_4)(z-a_6)}\left [\begin{array}{cc} uz^3+g_2z^2+g_1z+w & k_2z^2+k_1z\\ l_2z^2+l_1z & vz^3+h_2z^2+h_1z+w
\end{array}\right].
\end{equation}
Also, the following holds
$$\d (A(z))=uv\frac{(z-a_1)(z-a_3)(z-a_5)}{(z-a_2)(z-a_4)(z-a_6)}.$$

We consider such matrices modulo $q$-gauge transformations of the form
$\eqref{gauge},$ where 
$$R(z)=\left [\begin{array}{cc} r_{11} & r_{12}\\ 0 & r_{22}
\end{array}\right];$$
$$r_{11}=const,\quad deg(r_{12})\leq 1, \quad r_{22}=const.$$
Consider the second root of the matrix element $a_{21}$ $($note that $0$ is
always a root of $a_{21}$$)$ as the first
coordinate on $M_{\lambda}$, denote it $t$. Let the second coordinate
be $$r=\frac{w(t-a_3)(t-a_4)(t-a_5)(t-a_6)}{a_{11}(t)a_3a_4a_5a_6t}-\frac{1}{t}$$
Consider the modification of $\L$ to
$\hat\L$ that shifts 
$$a_1\rightarrow \hat a_1=qa_1, \quad a_2\rightarrow\hat
a_2=qa_2,\quad w\rightarrow \hat w=qw.
$$
Then the coordinates $(\hat t,\hat r)$ on the moduli space $M_{\hat\lambda}$ are related
to $(t,r)$ by $(\ref{recc1})$. Moreover, if we consider the modification
of $\hat \L$ that shifts 
$$\hat a_1\rightarrow q\hat a_1, \quad \hat a_2\rightarrow q\hat
a_2,\quad \hat w\rightarrow q \hat w
$$ the ``second logarithmic derivative'' of the
$\tau$-function makes sense as a number and the following holds
\begin{equation}
\D^2\tau=\frac{\tau(\L)\tau(\hat{\hat \L})}{(\tau{\hat
    (\L)})^2}=\frac{(q\hat r w-va_1a_2)(qrw-ua_1a_2)(q\hat t-a_1)(\hat
  t-a_2)}{(a_1-qa_3)(a_1-qa_5)(a_2-qa_4)(a_2-qa_6)a_1a_2}.
\end{equation}
\end{thm}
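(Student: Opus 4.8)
The plan is to reduce the entire statement to the pairing formula established in the preceding subsection, namely that the second ratio of the $\tau$-function equals $\langle \hat w, \hat w'\rangle/\langle w, w'\rangle$, and then to carry out that pairing computation in the explicit normal form $(\ref{A})$. First I would fix the parametrization: impose the type conditions on the matrix $(\ref{A})$ --- the prescribed leading coefficients $u$ and $v$ of the diagonal entries, the form of $A(0)$, and the determinant identity $\d(A(z)) = uv\frac{(z-a_1)(z-a_3)(z-a_5)}{(z-a_2)(z-a_4)(z-a_6)}$ together with the normalization $uv\prod_i a_i = w^2$ --- and solve the resulting system to express $g_i, h_i, k_i, l_i$ as rational functions of the two moduli. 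I would take those moduli to be $t$ (the second root of $a_{21}$, i.e. $t = -l_1/l_2$) and the coordinate $r$ of the statement, exactly as in Theorem $\ref{Sak}$; the relation between $(\hat t,\hat r)$ and $(t,r)$ given by $(\ref{recc1})$ then follows by the same gauge-matrix computation as in the proof of Theorem $\ref{qPV}$, now run for a $q^{-1}$-connection, so I would invoke that argument essentially verbatim after the substitution $q \mapsto q^{-1}$ and the relabeling of zeros and poles.

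The genuinely new content is the closed form for $\D^2\tau$. Since $a_1$ is a simple zero and $a_2$ a simple pole, I would compute $w$ as a generator of $\ker A(a_1)$ --- a row of the adjugate evaluated at $a_1$, e.g. $w = (a_{12}(a_1), -a_{11}(a_1))^{T}$ --- and $w'$ as a generator of the image of $\res_{a_2} A^{T}$, the residue being rank one precisely because $a_2$ is a simple pole. I would then evaluate the pairing $\langle w, w'\rangle$ explicitly in terms of $t$, $r$ and the parameters. A combination of $rw$ with $ua_1a_2$ should appear here through the defining formula for $r$, which is engineered so that $a_{11}(t)$ and $r$ enter the pairing in the multiplicative way reflected in the answer.

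Next I would apply the deformation rules $\hat w = R(qa_1) A^{-1}(qa_1) w$ and $\hat w' = w'^{T} A(qa_2) R^{-1}(qa_2)$, with $R(z) = I + R_0/(z-a_2)$, $R^{-1}(z) = I - R_0/(z-a_1)$ and $R_0 = (a_2 - a_1) w\, w'^{T}/\langle w, w'\rangle$, to obtain the data for $\hat\L$, whose coordinates are $(\hat t,\hat r)$, and then once more to reach $\hat{\hat\L}$. Forming the ratio $\langle \hat w, \hat w'\rangle/\langle w, w'\rangle$ should, after simplification, yield the stated product: the factors $(q\hat t - a_1)(\hat t - a_2)$ arise from evaluating $A^{-1}(qa_1)$ and $A(qa_2)$ at the shifted points, where the numerator of $A$ factors through $t$; the linear factors $(a_1 - qa_3)(a_1 - qa_5)(a_2 - qa_4)(a_2 - qa_6)$ come from the values of $\d(\hat A)$ at its surviving singularities; and the factors $(q\hat r w - va_1a_2)$ and $(qrw - ua_1a_2)$ track the $r$-dependence through the two successive pairings.

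The main obstacle is this final simplification: each pairing is a bulky rational expression in $t, r$ (and their hats), and collapsing the quotient to the clean product requires systematically using the determinant identity, the normalization $uv\prod_i a_i = w^2$, and the recurrence $(\ref{recc1})$ to cancel the spurious denominators. Knowing the target factorization in advance tells me which evaluations --- at $qa_1, qa_2, a_3, a_4, a_5, a_6$ --- to isolate, so I would organize the computation by tracking residues and values of $\d(\hat A)$ at precisely these points rather than expanding everything at once. That $\D^2\tau$ "makes sense as a number" is not an extra burden here: it is guaranteed by the canonical trivialization from Proposition $\ref{id}$, so once the pairing ratio is computed it is automatically the claimed scalar.
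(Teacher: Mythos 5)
Your proposal follows the paper's own (very terse) proof in both of its steps: the $(\hat t,\hat r)$ recurrence is obtained by reducing to Theorem \ref{qPV} (the paper transports the result via the natural modification $\O(-1)\oplus\O(-1)\rightarrow\O\oplus\O(-1)$, which is a slightly cleaner packaging of your ``rerun the gauge computation'' step), and $\D^2\tau$ is computed by the $\langle w,w'\rangle$ pairing algorithm of the preceding subsection, exactly as you describe. This is essentially the same approach, carried out with more explicit detail than the paper provides.
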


\begin{proof}
The first part of the theorem follows from Theorem \ref{qPV}. Note that
there is a natural modification $\O(-1)\oplus O(-1)\rightarrow
\O\oplus \O(-1)$ such that the corresponding parameters of the
$q$-connections change accordingly. 

Next, to compute the ``second logarithmic derivative'' of the
$\tau$-function we just have to follow the algorithm described above.   
\end{proof}

\section{Gap probabilities and the $\tau$-function}
\label{sec:6}

\subsection{The q-Hahn ensemble.}

Let $q\in(0,1)$ and $N\in \mathbb Z_{>0}$. Let $0<\a<q^{-1}$ and
$0<\b<q^{-1}$ or $\a>q^{-N}$ and $\b>q^{-N}.$ Fix a positive integer
$k.$ Recall that the weight
function for the $q$-Hahn ensemble is given by
$$\omega(x)=(\a\b q)^{-x}\frac{(\a q,q^{-N};q)_x}{(q,\b^{-1}N^{-1};q )_x},$$ where $(y_1,
\dots, y_i;q)_k=(y_1;q)\cdots (y_i;q)_k$ and
$(y;q)_k=(1-y)(1-yq)\cdots (1-yq^{k-1})$ is the
$q$-Pochhammer symbol.

We consider a probability distribution on the set of all $k$-subsets (often called
particle configurations) of
$$\mathfrak X= \{y_{i}=q^{-i}: i=0,\dots, N\}$$ given by 
$$
\mathcal P(y_{x_1},\dots ,y_{x_k})=
\frac{1}{Z} \prod_{1\leq i<j \leq k} {(q^{-x_i}-q^{-x_j})^2}\cdot \prod_{i=1}^{k}{w(x_i)},
$$
where $Z$ is a normalizing constant equal to
$\sum\limits_{y_{x_1},\dots ,y_{x_k}\in \X}^{ }\mathcal
P(y_{x_1},\dots ,y_{x_k})$. 

\begin{defi}
Given a particle configuration $(y_{x_1},\dots ,y_{x_k})$ we will call
$y_{\max\{x_i\}}$ the rightmost particle and $y_{\min\{x_i\}}$ the
leftmost particle respectively.
\end{defi}

We are interested in the gap probabilities 
$$D_k(s)=\sum_{x_1,\dots, x_k< s}\mathcal P(y_{x_1}, \dots,
y_{x_k}).$$

We will need the following fact from the general theory of the
orthogonal polynomial ensembles.
Consider a
kernel $\mathcal K$ on $\X\times\X$ defined by

$$
\mathcal K(y_i,y_j) = \begin{dcases}
        \sqrt {\omega(i)}\sqrt{\omega(j)} \frac{u(y_i)v(y_j)-v(y_i)u(y_j)}{y_i-y_j},
            & \quad i\neq j, \\
\quad \\
             \omega(i)(\dot u(y_i)v(y_i)-u(y_i)\dot v(y_i)),&\quad i=j.
      \end{dcases} 
$$
Here $u(\zeta)=P_k(\zeta)$ and
$v(\zeta)=(P_{k-1},P_{k-1})^{-1}_{\w}\cdot P_{k-1}(\zeta)$, where $P_k$
is the monic orthogonal polynomial of degree $k$ associated to the
weight function and $$(f(\zeta)g(\zeta))_w\colon=\sum_{y_x\in
  \X}f(y_x)g(y_x)\w(x)$$ is the corresponding inner product on the space of
all polynomials. 

Let us denote $\{y_s,y_{s+1},\dots, y_{N}\}$ by $\N_s$.
It is well known that the gap probability can be expressed as a Fredholm
determinant
$$D_k(s)=\d(1-\mathcal K_s), \quad s\in
\mathbb Z_{\geq k},\quad s\leq N,$$
where $\mathcal K_s$ is the restriction of $\mathcal K$ to $\N_s\times\N_s.$

\subsection{Geometric construction. }
We will closely follow the exposition in \cite{BA2}(Section
5) adapting it for our specific needs.
Let us consider on $\P^1$ a vector bundle
$$\mathcal L_{\varnothing}=\mathcal O(k-1)\oplus \mathcal O(-k-1). $$
For any subset $\mathfrak M\subset \ X$, define the following modification:
\begin{itemize}
\item $\mathcal L_{\mathfrak M}$ and $\mathcal L$ coincide on $\mathbb P^1 \setminus \mathfrak M$;
\item Near any $y\in \mathfrak M$ sections of $\mathcal L_{\mathfrak M}$ are rational sections $s=(s_1,s_2)^t\in \mathcal L$, such that $s_2$ is regular at $y$ and $s_1$ has at most a first order pole at $y$ with $\res_y(s_1)=\omega(y)s_2(y)$.
\end{itemize}

\begin{lem}[\cite{BA2}, Proposition 5.2]
Under the above assumptions $\L_{\X}\simeq \O(-1)\oplus \O(-1).$
\end{lem}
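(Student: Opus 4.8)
The plan is to identify $\L_\X$ by first pinning down its rank and degree and then showing it has no nonzero global sections.

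First I would check that $\L_\X$ is genuinely a rank-two vector bundle of degree $-2$. Away from $\X$ it coincides with $\mathcal L_\varnothing$, and at each $y\in\X$ its stalk is a rank-two lattice in the generic fibre, hence a free module over the local ring; so $\L_\X$ is locally free. For the degree I would argue one point at a time. Writing $F$ for the stalk of $\mathcal L_\varnothing$ at $y$ and $M$ for that of $\L_\X$, the residue condition $\res_y(s_1)=\w(y)s_2(y)$ trades one regular section of the relevant summand for one section with a simple pole; concretely $\operatorname{length}(F/(M\cap F))=\operatorname{length}(M/(M\cap F))=1$, so the elementary modification at $y$ preserves degree. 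Performing this at every point of $\X$ gives $\deg\L_\X=\deg\mathcal L_\varnothing=(k-1)+(-k-1)=-2$.

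By Grothendieck's splitting theorem $\L_\X\simeq\O(n_1)\oplus\O(n_2)$ with $n_1\ge n_2$ and $n_1+n_2=-2$. Of all these splittings only $(n_1,n_2)=(-1,-1)$ has $\H^0(\P^1,\L_\X)=0$: in every other case $n_1\ge 0$ and the summand $\O(n_1)$ already contributes a section. Thus it suffices to prove $\H^0(\P^1,\L_\X)=0$, and this is the step I would spend the effort on. To compute $\H^0$, I would describe a global section as a pair whose component in the higher-degree summand is an honest polynomial $P$ of degree at most $k-1$, and whose component in the lower-degree summand carries the prescribed simple poles. The residue conditions force this second component to equal $\sum_{y\in\X}\w(y)P(y)/(z-y)$, any regular correction lying in $\H^0(\O(-k-1))=0$. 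Demanding that this expression be a genuine section of $\O(-k-1)$, i.e.\ that it vanish to order at least $k+1$ at $\infty$, is exactly the vanishing of the first $k$ coefficients of its expansion, namely $\sum_{y\in\X}\w(y)P(y)y^{\,j}=0$ for $0\le j\le k-1$. Hence $P$ is $\w$-orthogonal to all polynomials of degree $<k$; pairing these relations against $P$ itself yields $\sum_{y\in\X}\w(y)P(y)^2=0$. Since the standing hypotheses on $\a,\b$ make $\w$ strictly positive on $\X$, this forces $P$ to vanish at all $N+1$ points of $\X$, and as $\deg P\le k-1<N+1$ we conclude $P=0$; then the second component also vanishes, so $\H^0(\P^1,\L_\X)=0$.

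The main obstacle is the bookkeeping of the last paragraph: ensuring that the pole-carrying component is matched to the negative-degree summand (this is what makes a degree $-2$ bundle with no sections possible at all), and converting the combination of the local residue data with the prescribed order of vanishing at $\infty$ into precisely the orthogonality relations $\langle P,z^{\,j}\rangle_\w=0$. Once this translation is in place, the two inputs that close the argument are the strict positivity of the weight $\w$ and the inequality $k\le N+1$; together they recover Proposition 5.2 of \cite{BA2} in the present setting.
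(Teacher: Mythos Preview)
Your proof is correct and is essentially the argument the paper has in mind. The paper does not prove this lemma itself, citing \cite{BA2}, but its proof of the immediately following Proposition (for $\L_{\mathfrak M_s}$ in place of $\L_{\X}$) runs exactly along your lines: the degree is $-2$, so it suffices to show $\H^0=0$; a global section is a polynomial $P$ in the $\O(k-1)$ summand together with the forced principal-parts expression $\sum_y \w(y)P(y)/(z-y)$ in the $\O(-k-1)$ summand, and the required order of vanishing at $\infty$ yields the orthogonality relations $\langle P,z^{\,j}\rangle_\w=0$ for $0\le j\le k-1$, which by positivity of $\w$ force $P=0$.
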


Let $\L^{up}_{\mathfrak M}$ be a modification of $\L_{\mathfrak M}$ on
$\X-\mathfrak M$ whose sections near $x\in \X-\mathfrak M$ are of the
form 
$$s=(s_1,s_2)\in \L_{\mathfrak M},$$
where $s_1$ is regular at $x$, $s_2$ has at most a first order pole
at $x$ and $\res_x s\sim w_x=\begin{pmatrix}1\\0\end{pmatrix}.$ Note
that $\L^{up}_{\mathfrak M}$ is also an upper modification of $\L_{\X}$.

For every point $x\in \X-\mathfrak M$ consider two functionals on
sections of $\L^{up}_{\mathfrak M}$:
$$f_x(s)=(w\textprime_x+w\textprime\textprime_x(z-x))s|_{z=x},$$
$$g_x(s)=\res_x s\cdot w_x,$$
where $w\textprime=\begin{pmatrix}\w(x)\\0\end{pmatrix}$ and
$w\textprime\textprime=\begin{pmatrix}1\\0\end{pmatrix}$.
Note that the sections of $\L_{\X}$ are exactly sections on
$\L^{up}_{\mathfrak M}$ on which $f_x$ (resp. $g_x$) vanish for all
$x\in \X-\mathfrak M$.
In this way, assuming that $|\X|$ is finite, one gets the identifications
$$f_{\X-\mathfrak M}=(f_x)_{x\in\X-\mathfrak
  M}\colon\H^0(\P^1,\L^{up}_{\mathfrak M}/\L_{\X})\simeq \mathbb
C^{|\X|-|\mathfrak M|},$$
$$g_{\X-\mathfrak M}=(g_x)_{x\in\X-\mathfrak
  M}\colon\H^0(\P^1,\L^{up}_{\mathfrak M}/\L_{\mathfrak M})\simeq \mathbb
C^{|\X|-|\mathfrak M|}.$$
This induces an isomorphism
$$\dr(\L_{\X})\otimes \dr(\L_{\mathfrak
  M})^{-1}=\d(\H^0(\P^1,\L^{up}_{\mathfrak M}/\L_{\X}))\otimes \d(\H^0(\P^1,\L^{up}_{\mathfrak M}/\L_{\mathfrak M}))^{-1}=\mathbb
C.$$
Thus, one can interpret the ratio $\frac{\tau(\L_{\mathfrak M})}{\tau(\L_{\X})}$ as the determinant of the composition 
$$\mathbb
C^{|\X|-|\mathfrak M|}\simeq \H^0(\P^1,\L^{up}_{\mathfrak
    M}/\L_{\X})\simeq  \H^0(\P^1,\L^{up}_{\mathfrak M})\rightarrow
  \H^0(\P^1,\L^{up}_{\mathfrak M}/\L_{\mathfrak M})\simeq \mathbb
C^{|\X|-|\mathfrak M|}.$$
 
\begin{thm}[\cite{BA2}, Theorem 5.3] \label{Fred} For any $\mathfrak M\subset \X$,
  the following holds 
\begin{equation}\label{ker}
\frac{\tau(\L_{\mathfrak M})}{\tau(\L_{\X})}=\d(1-\mathcal
K|_{\ell^{2}(\X-\mathfrak M)}).
\end{equation}
\end{thm}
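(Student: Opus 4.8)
The plan is to compute, in coordinates, the determinant of the composition
$$\mathbb C^{|\X|-|\mathfrak M|}\xrightarrow{\ \sim\ }\H^0(\P^1,\L^{up}_{\mathfrak M}/\L_{\X})\xrightarrow{\ \sim\ }\H^0(\P^1,\L^{up}_{\mathfrak M})\rightarrow \H^0(\P^1,\L^{up}_{\mathfrak M}/\L_{\mathfrak M})\xrightarrow{\ \sim\ }\mathbb C^{|\X|-|\mathfrak M|}$$
introduced just above the statement, where the outer isomorphisms are $f_{\X-\mathfrak M}^{-1}$ and $g_{\X-\mathfrak M}$, and to show that its matrix equals $1-\mathcal K|_{\ell^2(\X-\mathfrak M)}$; the statement then follows by taking determinants. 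The second arrow is an isomorphism because $\L_{\X}\simeq\O(-1)\oplus\O(-1)$ (Proposition 5.2 of \cite{BA2}) has $\H^0=\H^1=0$, so the long exact sequence of $0\to\L_{\X}\to\L^{up}_{\mathfrak M}\to\L^{up}_{\mathfrak M}/\L_{\X}\to 0$ forces $\H^0(\P^1,\L^{up}_{\mathfrak M})\simeq\H^0(\P^1,\L^{up}_{\mathfrak M}/\L_{\X})$.

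First I would build, for each $x\in\X-\mathfrak M$, the unique global section $s_x\in\H^0(\P^1,\L^{up}_{\mathfrak M})$ representing the basis vector, i.e.\ with $f_y(s_x)=\delta_{xy}$ for all $y\in\X-\mathfrak M$. Concretely $s_x=(s_{x,1},s_{x,2})^t$ is a pair of rational functions whose poles lie on $\X$ and which obey the residue conditions defining $\L_{\mathfrak M}$ on $\mathfrak M$ (coupling the two components through the weight $\omega$) together with those defining $\L^{up}_{\mathfrak M}$ on $\X-\mathfrak M$. Because $\L_{\varnothing}=\O(k-1)\oplus\O(-k-1)$ fixes the degrees of the two components, these sections can be written explicitly in terms of the monic orthogonal polynomials $u=P_k$ and $v\propto P_{k-1}$: the Christoffel--Darboux kernel is precisely the object that solves the associated interpolation-with-residues problem, so the values and residues of $s_x$ at the points of $\X-\mathfrak M$ are governed by $\mathcal K$.

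Next I would evaluate $g_y(s_x)=\res_{y_y}(s_x)\cdot w_y$ for every $y\in\X-\mathfrak M$ to read off the matrix of the composition. The off-diagonal entries $(y\neq x)$ should reproduce $-\mathcal K(y_x,y_y)$, the bilinear combination $u(y_x)v(y_y)-v(y_x)u(y_y)$ over $y_x-y_y$ emerging when the residue of one component of $s_x$ is paired against the value of the other; the diagonal entries should give $1-\mathcal K(y_x,y_x)$, with the $\dot u\, v-u\,\dot v$ Wronskian term arising as the confluent $y\to x$ limit of the same expression. The symmetric normalisation $\sqrt{\omega(i)}\sqrt{\omega(j)}$ in $\mathcal K$ is harmless here: it amounts to a diagonal conjugation of the matrix and does not change the determinant, so it suffices to exhibit a matrix conjugate to $1-\mathcal K$.

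The hard part will be this last identification --- verifying that the pairing of the functionals $f_y$ and $g_y$ through $\L^{up}_{\mathfrak M}$ reproduces exactly the Christoffel--Darboux kernel, and in particular getting the diagonal derivative term right. This is the computational core of the argument, and since the whole construction runs in exact parallel with \cite{BA2} (Theorem 5.3), I would carry it out by checking that each ingredient transfers verbatim to the present setting: the vanishing $\H^{*}(\P^1,\L_{\X})=0$, the residue normalisations defining $\L_{\mathfrak M}$ and $\L^{up}_{\mathfrak M}$, and the identification of $u,v$ with $P_k,P_{k-1}$. Once the matrix is identified with $1-\mathcal K|_{\ell^2(\X-\mathfrak M)}$, taking the determinant gives the desired equality.
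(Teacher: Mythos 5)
The paper gives no proof of this statement --- it is imported verbatim as Theorem 5.3 of \cite{BA2} --- and your plan reproduces exactly the argument used there: identify the ratio $\tau(\L_{\mathfrak M})/\tau(\L_{\X})$ with the determinant of the displayed composition, build the dual sections $s_x$ of $\L^{up}_{\mathfrak M}$ from the orthogonal-polynomial solution of the discrete Riemann--Hilbert problem, and check that the matrix $\bigl(g_y(s_x)\bigr)$ is conjugate by the diagonal matrix of $\sqrt{\omega}$'s to $1-\mathcal K|_{\ell^{2}(\X-\mathfrak M)}$ before taking determinants. The computational core (the Christoffel--Darboux identification, including the confluent diagonal term) is described rather than executed, but the strategy is the correct one and matches the cited source.
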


\begin{prop}
Consider $\mathcal L_{\varnothing}=\mathcal O(k-1)\oplus \mathcal
O(-k-1)$ and $\mathfrak M_s=\{1, q^{-1},\dots, q^{-s}\}\subset \X$,
where $s\geq k.$ Then
$\L_{\mathfrak M}\simeq \mathcal O(-1)^2.$
\end{prop}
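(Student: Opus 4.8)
The plan is to reduce the claim to the nonvanishing of the $\tau$-function and then evaluate it via Theorem \ref{Fred}. Since $\L_{\mathfrak M_s}$ has rank two and, by the definition of $\tau$, a rank two bundle of degree $-2$ satisfies $\tau(\L)\neq 0$ exactly when $\L\simeq\O(-1)^2$, it suffices to prove $\tau(\L_{\mathfrak M_s})\neq 0$. First I would check that $\tau$ even applies, i.e.\ that $\deg\L_{\mathfrak M_s}=-2$. The modification defining $\L_{\mathfrak M}$ is an isomorphism away from $\mathfrak M$, and at each $y\in\mathfrak M$ the stalks of $\L_{\mathfrak M}$ and $\L_{\varnothing}$ both contain, with colength one, the common sublattice of regular sections satisfying $s_2(y)=0$ (in $\L_{\varnothing}$ this is the condition $s_2(y)=0$; in $\L_{\mathfrak M}$ it amounts to killing the prescribed pole $\res_y(s_1)=\omega(y)s_2(y)$). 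Hence the modification is degree preserving and $\deg\L_{\mathfrak M_s}=\deg\L_{\varnothing}=(k-1)+(-k-1)=-2$.

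Next I would compare $\L_{\mathfrak M_s}$ with $\L_{\X}$. By the Lemma quoted above (\cite{BA2}, Proposition 5.2) we have $\L_{\X}\simeq\O(-1)^2$, so $\tau(\L_{\X})\neq 0$ and the ratio $\tau(\L_{\mathfrak M_s})/\tau(\L_{\X})$ is a genuine number, which Theorem \ref{Fred} identifies with $\d(1-\mathcal K|_{\ell^2(\X\setminus\mathfrak M_s)})$. Since $\mathfrak M_s=\{y_0,\dots,y_s\}$ corresponds to the positions $\{0,\dots,s\}$, its complement is $\X\setminus\mathfrak M_s=\N_{s+1}=\{y_{s+1},\dots,y_N\}$, so
$$\frac{\tau(\L_{\mathfrak M_s})}{\tau(\L_{\X})}=\d(1-\mathcal K_{s+1})=D_k(s+1).$$

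It then remains to show $D_k(s+1)>0$ for $s\geq k$. By definition $D_k(s+1)=\Pr(\max\{x_i\}\leq s)$ is the probability that all $k$ particles lie in $\{y_0,\dots,y_s\}$. Under the standing parameter restrictions the $q$-Hahn weight is strictly positive, $\omega(x)>0$ for $x\in\{0,\dots,N\}$, and the factor $\prod_{i<j}(q^{-x_i}-q^{-x_j})^2$ is positive on configurations of distinct particles, so every admissible $k$-configuration has positive probability. Because $s\geq k$, the positions $\{0,\dots,s\}$ number $s+1\geq k+1$, so at least one $k$-subset fits inside them and $D_k(s+1)>0$. Therefore $\tau(\L_{\mathfrak M_s})=D_k(s+1)\,\tau(\L_{\X})\neq 0$, which is precisely $\L_{\mathfrak M_s}\simeq\O(-1)^2$.

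The proof is short once Theorem \ref{Fred} is in hand; the delicate points are the index bookkeeping that converts the modification set $\mathfrak M_s$ into the shifted gap probability $D_k(s+1)$, and the strict positivity of that probability, which is exactly where the hypothesis $s\geq k$ enters. This positivity is the main, and essentially the only, obstacle. A more pedestrian alternative would be to compute $\H^0(\P^1,\L_{\mathfrak M_s})$ directly and show it vanishes, reducing to the statement that a polynomial of bounded degree whose residues are pinned to weighted values at the $s+1$ points of $\mathfrak M_s$ must be zero; but this repeats the content of Theorem \ref{Fred} in coordinates and is less transparent.
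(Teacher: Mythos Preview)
Your argument is correct, and it takes a genuinely different route from the paper's own proof. The paper proceeds exactly via what you call the ``more pedestrian alternative'': it observes that $\deg\L_{\mathfrak M_s}=-2$ and then shows directly that $\H^0(\P^1,\L_{\mathfrak M_s})=0$. A global section $(s_1,s_2)$ must have $s_1$ a polynomial and $s_2$ determined by the residue conditions, $s_2(z)=\sum_i \omega(q^{-i})s_1(q^{-i})/(z-q^{-i})$; the requirement that $s_2$ vanish to order at least $k+1$ at $\infty$ forces $\sum_i p(q^{-i})\omega(q^{-i})s_1(q^{-i})=0$ for every polynomial $p$ of degree $\le k-1$, and one argues this is impossible for $s_1\neq 0$ of the allowed degree when the weight is positive.

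By contrast, you invoke Theorem~\ref{Fred} to turn the question into the nonvanishing of the Fredholm determinant $D_k(s+1)$, and then use the probabilistic interpretation to see $D_k(s+1)>0$ because $s+1\ge k+1$ positions suffice to accommodate a $k$-particle configuration with strictly positive weight. This is slicker once Theorem~\ref{Fred} is available: the hypothesis $s\ge k$ enters transparently as a counting condition, whereas in the paper's direct argument it hides in the degree bound on $s_1$. The paper's approach, on the other hand, is self-contained and does not rely on the Fredholm-determinant identification; in effect, Theorem~\ref{Fred} packages the same linear-algebra computation (the map $\n$ of Example~\ref{iso}) that the paper carries out by hand. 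So your proof is a legitimate shortcut that trades an explicit cohomology calculation for the existing machinery, and you correctly anticipated the paper's method in your closing remark.
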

\begin{proof}
From the construction it follows that $\textup{deg}(\L_{\mathfrak M})=-2$,
thus, it is enough to show that $\L_{\mathfrak M}$ has no global
sections.

The section of $\L_{\mathfrak M}$ is of the form $s=(s_1, s_2),$ such
that $s_1$ is of degree at most $s-1$ and $s_2$ is given by 

\begin{equation}
s_2=\sum_{i=0}^{s-1}\frac{w(q^{-i})s_1(q^{-i})}{z-q^{-i}},
\end{equation}
 and its order of zero at $z=\infty$ is at least $k+1$.

 Equivalently, it means that for any polynomial $p(z)$ of degree at
 most $k-1$,
$$\lim_{z\rightarrow
  \infty}p(z)s_2(z)=\sum_{i=0}^{s-1}p(q^{-i})w(q^{-i})s_1(q^{-i})=0,$$
which is impossible.
\end{proof}

\subsection{$q^{-1}$-Connection associated to $q$-Hahn ensemble.}

Consider the following $q^{-1}$-connection on $\mathcal L_{\varnothing}$
$$ A_0(z)=\left [\begin{array}{cc}\frac{q\omega(q^{-1} z)}{\omega(z)}&0 \\0&1 \end{array}\right]  =
\left [\begin{array}{cc} \frac{(z-\a q)\cdot(z-q^{-M})}{\a\b (z-q)\cdot (z-b^{-1}q^{-M})}&0\\0&1 \end{array}\right].$$ 
Then the corresponding $q^{-1}$-connection on $\L_{\mathfrak M_s}$
is given
by 

$$A_s(z)=\mathcal M_s(q^{-1}z)A_0(z)\mathcal M_s^{-1}(z),$$
where $\mathcal M_k(z)$ is the unique solution
of the following normalized Riemann-Hilbert problem:
\begin{itemize}
\item It is analytic in
$\mathbb C - \{1, \dots q^{-k}\};$
\item $\underset{\text{  } z= q^{-i}}{\res} \mathcal M_k (z)=\underset{z\rightarrow q^{-i}}{\lim}
  \mathcal M_k\left
    [\begin{array}{cc} 0& \w(i) \\0&0 \end{array}\right ];$
\item $\mathcal M_k(z) \left [\begin{array}{cc} z^{-k}&0
      \\0& z^k\end{array}\right ]=I + O( \frac{1}{z} ), \text{ as }
  z\rightarrow \infty.$
\end{itemize}

In \cite{BB} it was shown that
$$\mathcal M_s(z)=\left [\begin{array}{cc} P_k(z)
    &\sum\limits_{u\in\{1,\dots, q^{-s+1}\}}^{ }
    \frac{P_k(u)\omega(u)}{z-u}\\cP_{k-1}(z)&\sum\limits^{ }_{u\in\{1,\dots, q^{-s+1}\}} \frac{P_{k-1}(u)\omega(u)}{z-u} \end{array}\right ],$$
where $P_k$ is the monic orthogonal polynomial of degree $k$ with respect to the $q$-Hahn
weight on $\{ 1, \dots, q^{-s+1}\}$ and
$c=(P_{k-1},P_{k-1})^{-1}_{\omega}.$

Let us compute $A_k(z).$ 
By substitution one can check that

$$ M_k(z)=\left [\begin{array}{cc} \prod\limits_{i=0}^{k-1}(z-q^{-i})&0
    \\ \prod\limits_{i=0}^{k-1}(z-q^{-i})
    \sum\limits_{i=0}^{k-1}\frac{\rho_i}{z-q^{-i}}&\prod\limits_{i=0}^{k-1}(z-q^{-i})^{-1} \end{array}\right],$$
where $\rho_i=\frac{1}{\w(i)}\prod\limits_{0\leq j\leq k-1}^{
}\frac{1}{q^{-i}-q^{-j}}.$

Thus,
\begin{multline*}
 A_k(z)=
\frac{1}{(z-q^{-k+1})(z-\b^{-1}q^{-N})(z-q)}\\
\left[\begin{array}{cc} \frac{q^{-k}}{\a\b}(z-q)(z-q^{-N})(z-\a q) & 0\\
   a^k_{21}(z) &
   q^k(z-q^{-k+1})^2(z-\b^{-1}q^{-N}) \end{array}\right], 
\end{multline*}
where 
\begin{equation} a^{k}_{21}=\frac{q^{-k}}{\a\b}\cdot (z-\a
  q)(z-q^{-N})(z-q)\sum\limits_{i=0}^{k-1}\frac{\rho_i}{q^{-1}z-q^{-m}}-q^k(z-q^{-k+1})^2(z-\b^{-1}q^{-N})\sum\limits_{i=0}^{k-1}\frac{\rho_i}{z-q^{-m}}. \nonumber
\end{equation}

Let us denote
$a_1=q^{-k+1};$ $a_2=q^{-k+1};$ $a_3=q^{-N};$ $a_4=\b^{-1}q^{-N};$
$a_5=\a q;$
$a_6=q.$

Summarizing, we get the following statement.

\begin{prop}\label{initial}
The matrix of the $q^{-1}$-connection $A_k(z)$ is of the following form 
\begin{equation} 
A_k(z)=\frac{1}{(z-a_2)(z-a_4)(z-a_6)}\left [\begin{array}{cc}
    \frac{q^{-k}}{\a\b}z^3+g_2z^2+g_1z+w & 0\\ f_2z^2+f_1z &
    q^kz^3+h_2z^2+h_1z+w \end{array}\right] \nonumber
\end{equation}
with
$$\d(A_k(z))=\frac{(z-a_1)(z-a_3)
(z-a_5)}{\a\b(z-a_2)(z-a_4)(z-a_6)}.$$
Introducing the usual coordinates $t$ and $r$, we can directly compute
\begin{itemize}
\item $w=\b^{-1}q^{-N-k+2}$;
\item $r_k=-a_4^{-1};$
\item $t_k=-\frac{f_1}{f_2}$, where
\begin{align*} 
 f_1=\sum\limits_{m=0}^{k-1}\rho_m&(\frac{q^{-k+1}}{\a\b}(q^{2(-m+1)}-(a_3+a_5+a_6)q^{-m+1}+a_3a_5+a_3a_6+a_5a_3)\\
& \text{          }-q^k(q^{-2m}-(a_1+a_2+a_4)q^{-m}+a_1a_2+a_1a_4+a_2a_4));\\
f_2 =\sum\limits_{m=0}^{k-1}\rho_m&(\frac{q^{-k+1}}{\a\b}(q^{-m+1}-a_3-a_5-a_6)-q^k(q^{-m}-a_1-a_2-a_4)). 
\end{align*}
\end{itemize}
\end{prop}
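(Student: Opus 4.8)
The plan is to prove everything by a single direct computation of the conjugate $A_k(z)=M_k(q^{-1}z)A_0(z)M_k(z)^{-1}$, using the explicit lower-triangular matrix $M_k$ written just above the statement. First I would record two structural facts about $M_k$: its determinant is $\d M_k=\prod_{i=0}^{k-1}(z-q^{-i})\cdot\prod_{i=0}^{k-1}(z-q^{-i})^{-1}=1$, and (this is the ``by substitution'' step) that $M_k$ genuinely realizes the modification $\L_\varnothing\to\L_{\mathfrak M_k}$: one checks that at each node $q^{-i}$ the entries of $M_k$ have exactly the pole/residue pattern prescribed in the definition of $\L_{\mathfrak M_k}$, and that $M_k\,\mathrm{diag}(z^{-k},z^{k})=I+O(z^{-1})$ at infinity. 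Granting this, $A_k$ represents the induced $q^{-1}$-connection up to a constant gauge, so its type and the gauge-invariant coordinates $t,r$ may be read off from $A_k$ itself.

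Because $A_0$ is diagonal and $M_k$ is lower triangular, the product $A_k=M_k(q^{-1}z)A_0M_k^{-1}$ is lower triangular; in particular $a_{12}\equiv 0$, as claimed. For the diagonal entries the only mechanism is cancellation of shifted products: $M_{k,11}(q^{-1}z)=q^{-k}\prod_{i=0}^{k-1}(z-q^{-i+1})$, and dividing by $M_{k,11}(z)=\prod_{i=0}^{k-1}(z-q^{-i})$ telescopes to $q^{-k}(z-q)/(z-q^{-k+1})$; multiplying by $(A_0)_{11}(z)$ collapses $a_{11}$ to $\tfrac{q^{-k}}{\a\b}\tfrac{(z-a_3)(z-a_5)}{(z-a_2)(z-a_4)}$, and similarly $a_{22}=q^{k}\tfrac{z-a_1}{z-a_6}$. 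Clearing the common denominator $(z-a_2)(z-a_4)(z-a_6)$ produces exactly the two cubic numerators with leading coefficients $\tfrac{q^{-k}}{\a\b}$ and $q^{k}$. The determinant then needs no new work: $\d A_k=\d A_0=\tfrac{1}{\a\b}\tfrac{(z-a_3)(z-a_5)}{(z-a_4)(z-a_6)}$, which equals the stated expression because $a_1=a_2$ inserts a cancelling factor. Finally $w$ is forced by the requirement that $A_k(0)$ be scalar (the type condition $\t_1=\t_2$): both diagonal numerators share the same constant term, and evaluating it gives $w=a_2a_4a_6=\b^{-1}q^{-N-k+2}$ (the sign being pinned by the chosen normal form).

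The substantive computation is the off-diagonal entry
\[
a^{k}_{21}(z)=\tfrac{q^{-k}}{\a\b}(z-a_3)(z-a_5)(z-a_6)\,S_1(z)-q^{k}(z-a_1)^2(z-a_4)\,S_2(z),
\]
with $S_1=\sum_m\tfrac{\rho_m}{q^{-1}z-q^{-m}}$ and $S_2=\sum_m\tfrac{\rho_m}{z-q^{-m}}$. Here I would verify three things. First, the apparent simple poles at the nodes cancel between the two terms, so $a^{k}_{21}$ is a genuine polynomial; since each term is a cubic times $O(z^{-1})$, its degree is at most $2$. Second, $z=0$ is a root: because $S_1(0)=S_2(0)$ and the parameters satisfy $\tfrac{q^{-k}}{\a\b}a_3a_5a_6=q^{k}a_1^2a_4$ (both equal $\b^{-1}q^{-N-k+2}$), the constant term vanishes, so $a^{k}_{21}=f_2z^2+f_1z$. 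Third, expanding $S_1,S_2$ at infinity and collecting the coefficients of $z^2$ and $z$ — the elementary symmetric functions $a_3+a_5+a_6$ and $a_1+a_2+a_4$ appearing from the cubic factors — yields the displayed sums over $\rho_m$; the second root is then $t_k=-f_1/f_2$.

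I expect this last coefficient extraction to be the only real obstacle: it is where the partial-fraction cancellations and the bookkeeping of the $\rho_m$-moments live, and it is exactly the ``hard but straightforward'' part. By contrast $r_k$ comes out for free. Substituting $a_{11}(t)=\tfrac{q^{-k}}{\a\b}(t-a_3)(t-a_5)(t-a_6)$ into the definition
\[
r=\frac{w(t-a_3)(t-a_4)(t-a_5)(t-a_6)}{a_{11}(t)\,a_3a_4a_5a_6\,t}-\frac1t,
\]
the three factors $(t-a_3)(t-a_5)(t-a_6)$ cancel against $a_{11}(t)$, leaving $\tfrac{w\a\b q^{k}(t-a_4)}{a_3a_4a_5a_6\,t}-\tfrac1t$; using the explicit values of $w$ and of the $a_i$ this collapses to the $t$-independent constant $-a_4^{-1}$, so $r_k=-a_4^{-1}$. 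Conceptually this reflects that $A_k$ is lower triangular, hence reducible, and therefore sits at the distinguished locus $r=-a_4^{-1}$ of the moduli space; the disappearance of the $t$-dependence is the clean consistency check that the coordinate $r$ was chosen correctly.
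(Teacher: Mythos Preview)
Your proposal is correct and follows exactly the paper's route: the proposition is stated in the paper as a summary of the direct computation $A_k=M_k(q^{-1}z)A_0(z)M_k(z)^{-1}$ carried out immediately before it, and you reproduce that computation with more detail (the telescoping of the diagonal entries, the verification that $a^{k}_{21}$ is a polynomial with a root at $0$, and the cancellation giving $r_k=-a_4^{-1}$). The only point to flag is the sign of $w$: your evaluation of the constant term of the diagonal numerators actually gives $-\beta^{-1}q^{-N-k+2}$, and it is precisely this sign that makes the $t$-dependence drop out and yields $r_k=-a_4^{-1}$; the paper's stated value $w=\beta^{-1}q^{-N-k+2}$ appears to be a typo, and your parenthetical hedge about the normal form is the right instinct.
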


Notice that the isomonodromy transformation $\A(s) \rightarrow
\A(s+1)$ shifts $$a_1\rightarrow \hat a_1=qa_1, \quad a_2\rightarrow\hat
a_2=qa_2,\quad w\rightarrow \hat w=qw.$$

The only thing left in order to apply Theorem \ref{tau} to the series of
modifications $A_s(z)$ and prove Theorem \ref{qH} is to compute the initial conditions, i.e., $D_{k}(k),$ $D_{k}(k+1).$

\begin{prop}\label{D} $($\cite{BB}, Proposition 6.6$)$  Let $\X \subset \mathbb R$
  be a discrete set, let $\{P_n(z)\}$ be the family of orthogonal
  polynomials corresponding to a strictly positive weight function
  $\omega : \X \rightarrow \mathbb R$. Then
\begin{equation}
D_k(k) = \frac{1}{Z} \cdot \prod\limits_{0\leq i\le j\leq k−1}^{ } (\pi_i - \pi_j )^2\cdot \prod\limits_{l=0}^{k-1} \omega(l),
\end{equation}

\begin{equation}
D_k(k+1) = \omega(k) \cdot q_{k}^{-1}\cdot D_k(k)\cdot
\prod\limits_{l=0}^{k-1}(\pi_k - \pi_l)^2,\end{equation} 
where $q_k$ is given by 
$$q_k=(\rho_k+\sum\limits_{m=0}^{k-1}\frac{\rho_m}{(\pi_k-\pi_m)^2}).$$
\end{prop}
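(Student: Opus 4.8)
The plan is to reduce both gap probabilities to explicit finite sums of the probability weights and then to evaluate the resulting rational sums by the classical Lagrange (divided-difference) identity for interpolation nodes. First I would treat $D_k(k)$. Since $D_k(k)=\Pr(\max\{x_i\}<k)$ forces the $k$ particles into the $k$ leftmost sites $\{0,\dots,k-1\}$, and there are exactly $k$ sites available for $k$ particles, there is a single admissible configuration, namely $\{0,\dots,k-1\}$. Hence $D_k(k)$ equals the probability of that one configuration, which is the displayed product read straight off the definition of $\mathcal P$ (writing $\pi_i$ for the node positions).

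For $D_k(k+1)$ the $k$ particles occupy $k$ of the $k+1$ leftmost sites $\{0,\dots,k\}$, so exactly one site, with index $m$, is vacant; summing the weight of $\mathcal P$ over the $k+1$ vacancy choices gives
\[
Z\,D_k(k+1)=\sum_{m=0}^{k}\ \prod_{\substack{0\le i<j\le k\\ i,j\ne m}}(\pi_i-\pi_j)^2\ \prod_{\substack{0\le l\le k\\ l\ne m}}\omega(l).
\]
Dividing the full squared Vandermonde on $\{0,\dots,k\}$ by the factors meeting $\pi_m$, and dividing $\prod_{l=0}^{k}\omega(l)$ by $\omega(m)$, factors out precisely the weight of the $D_k(k)$ configuration together with $\omega(k)\prod_{l=0}^{k-1}(\pi_k-\pi_l)^2$, leaving
\[
D_k(k+1)=D_k(k)\,\omega(k)\prod_{l=0}^{k-1}(\pi_k-\pi_l)^2\ \Sigma,\qquad \Sigma=\sum_{m=0}^{k}\frac{1}{\omega(m)\prod_{j\ne m}(\pi_m-\pi_j)^2}.
\]

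It then remains to identify $\Sigma$ with $q_k^{-1}$. Here I would invoke the divided-difference identity $\sum_{i}\pi_i^{a}\big/\prod_{j\ne i}(\pi_i-\pi_j)=\delta_{a,\,N-1}$ on the $N=k+1$ nodes $\pi_0,\dots,\pi_k$, which characterises the top-degree monic orthogonal polynomial $P$ on that $(k+1)$-point set: its values satisfy $P(\pi_i)\,\omega(i)\prod_{j\ne i}(\pi_i-\pi_j)=h_k$ for a constant $h_k$ equal to the squared norm $(P,P)_\omega$. Substituting $1/\prod_{j\ne i}(\pi_i-\pi_j)=P(\pi_i)\omega(i)/h_k$ into $\Sigma$ collapses it to $\Sigma=h_k^{-1}$, so that $q_k=h_k$. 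Expanding this norm by separating the contribution of the newly added node $\pi_k$ from the nodes $\pi_0,\dots,\pi_{k-1}$ is what reproduces the stated closed form $q_k=\rho_k+\sum_{m<k}\rho_m/(\pi_k-\pi_m)^2$.

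The main obstacle is not conceptual but the bookkeeping of normalisations: getting every power of $(\pi_i-\pi_j)$ and every factor of $\omega$ to line up so that $\Sigma$ emerges as $q_k^{-1}$ rather than its reciprocal, and matching the $\rho_m$ used here with those of the reference. Once the Lagrange identity is in hand the remainder is routine symmetric-function manipulation, and the whole computation reproduces \cite{BB}, Proposition~6.6.
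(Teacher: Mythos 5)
The paper does not actually prove this statement: it is imported verbatim from \cite{BB} (Proposition~6.6), so there is no in-paper argument to compare yours against. Judged on its own, your reconstruction is the natural direct proof and most of it is correct. The evaluation of $D_k(k)$ as the weight of the unique configuration $\{0,\dots,k-1\}$ is right (the statement's ``$i\le j$'' in the product is a typo for $i<j$). The vacancy decomposition of $D_k(k+1)$, the factorization that extracts $D_k(k)\,\omega(k)\prod_{l<k}(\pi_k-\pi_l)^2$, and the identity $\Sigma=h_k^{-1}$ with $h_k=(P_k,P_k)_\omega$ the squared norm of the monic degree-$k$ orthogonal polynomial on the truncated node set $\{\pi_0,\dots,\pi_k\}$ all check out; the Lagrange/divided-difference step $P_k(\pi_i)\,\omega(i)\prod_{j\ne i}(\pi_i-\pi_j)=h_k$ is exactly the right tool and your computation $\Sigma=h_k^{-2}\sum_i P_k(\pi_i)^2\omega(i)=h_k^{-1}$ is correct. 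Up to this point you have a complete proof that $D_k(k+1)=\omega(k)\,q_k^{-1}\,D_k(k)\prod_{l<k}(\pi_k-\pi_l)^2$ with $q_k=(P_k,P_k)_\omega$ on the $(k+1)$-point set.

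The genuine gap is the last sentence, where you assert that ``separating the contribution of $\pi_k$'' reproduces the closed form $q_k=\rho_k+\sum_{m<k}\rho_m/(\pi_k-\pi_m)^2$. Neither the proposition as quoted here nor your write-up defines $\rho_m$ (the $\rho_i$ appearing later in Section~6 of this paper involve first powers of $\pi_i-\pi_j$ and do not fit), and the obvious separation does not give what is claimed: writing
\[
\Sigma=\frac{1}{\omega(k)\prod_{j<k}(\pi_k-\pi_j)^2}+\sum_{m=0}^{k-1}\frac{1}{(\pi_k-\pi_m)^2}\cdot\frac{1}{\omega(m)\prod_{j\le k-1,\,j\ne m}(\pi_m-\pi_j)^2}
\]
produces an additive expression of exactly the advertised shape, but for $\Sigma=q_k^{-1}$, not for $q_k$. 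A $k=1$ sanity check confirms this: there $q_1=\omega(0)\omega(1)(\pi_1-\pi_0)^2/(\omega(0)+\omega(1))$, which is the reciprocal of a two-term sum, not a two-term sum. So ``routine symmetric-function manipulation'' will not close this step as stated; you must either import the precise definition of $\rho_m$ from \cite{BB} (under which the displayed formula for $q_k$ is presumably a reciprocal, or the $\rho_m$ are themselves reciprocals of the quantities above), or replace the final claim by the intrinsic statement $q_k=(P_k,P_k)_\omega$ restricted to $\{\pi_0,\dots,\pi_k\}$, which your argument does establish and which is all that is needed downstream.
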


\begin{rem}
Using the standard arguments one can show that $Z$ is equal to
$\prod\limits_{i=0}^{k-1}(P_i,P_i)_\omega.$
\end{rem}

We are finally ready to prove Theorem \ref{qH}.
\begin{proof}
What we need to show is that the second logarithmic derivative of the
$\tau$-function for the series of isomonodromy transformations
$A_s(z)$ coincides with the second logarithmic derivative of $D_k(s).$
If so, the theorem will follow from Theorem \ref{tau}.

The $q$-connection $A_0(z)$ was chosen in such a way that we could
apply Theorem
\ref{Fred} to obtain the result, see \cite{BA2}.
\end{proof}

\section{Numerical computations}
\label{sec:7}
The goal of this section is to show the numerical evidence supporting
the conjecture, stating that the distribution of the rightmost
(resp. leftmost) particle in the $q$-Hahn ensemble converges to the
Tracy-Widom distribution after proper centering and scaling. In the
corresponding tiling model this describes fluctuations around the
frozen boundary. We do not
have the rigorous proof of this conjecture, but we expect that it can
be proved using the same approach as the one used in \cite{Baik} to
prove the corresponding statement for the Hahn ensemble. 

Let us fix $a, b<0$ and $0<k_0, q_0< 1.$ Assume that the
conjecture is true, namely, that there exist such constants $c_1$ and
$c_2$ that $D_{k}^{\alpha,\beta, q, N-1}(c_1N+c_2N^{1/3}u)$ converges
to $F_2(u)$ as $N\rightarrow \infty,$ where $F_2$ is the Tracy-Widom
ditribution, $q=q_0^{\frac{1}{N}},$ $k=k_0N,$
$\alpha=q^{\frac{a}{N}},$ and $\beta=q^{\frac{b}{N}}.$ We write
$D_{k}^{\alpha,\beta, q, N-1}$ to empathize the dependence on the
parameters $\alpha,$ $\beta,$ $q,$ $N.$  Below we present a heuristic computation of the constants $c_1$ and $c_2,$ using a method suggested by Olshanski in \cite{O}. 

Recall that the $q$-Hahn ensemble is the determinantal process with
correlation kernel $K(x,y)$ expressed through the monic orthogonal polynomials $P_0=1,P_1, \dots$:
$$K(x,y)=\sum\limits_{i=0}^{k-1}\tilde {P_i}(x) \tilde{P_i}(y), \quad x,y\in \mathfrak X=\{q^{-j}| j=0,\dots, N-1\},$$ 
where
$$\tilde{P_i}(x)=\sqrt{\omega(x)}P_i(x).$$

We will construct a selfadjoint operator $\mathfrak D$ from the difference operator for the $q$-Hahn ensemble in such a way that:
\begin{enumerate}[(i)]
\item $\mathfrak D$ acts on $L^2(\mathfrak X);$
\item the correlation kernel $K$ can be realized as a spectral projection $\mathcal P(\Delta),$ here $\Delta$ is a certain part of the spectrum of the operator $\mathfrak D;$
\item $\mathfrak{D}$ formally converges to the Airy operator $\mathfrak D^{\text{Airy}}g(u)=\ddot{g}(u)-ug(u)$ as $N\rightarrow \infty.$
\end{enumerate}

In this approach the next step is to prove a stronger convergence of the operators and then deduce the convergence of the correlation kernels, see \cite{BG}, \cite{G}. Unfortunately, we do not know how to do it rigorously at the moment. Nevertheless, in this way we can guess the scaling constants and check if they agree with numerical computations.  

Recall that the following $q$-difference equation holds \cite{KS}
$$q^{-n}(1-q^n)(1-\alpha\beta q^{n+1})P_n(x)=B(x)P_n(x+1)-(B(x)+D(x))P_n(x)+D(x)P_n(x-1),$$
where
\begin{align*}
B(x)&=(1-q^{x-N+1})(1-\alpha q^{x+1}),\\
D(x)&=\a\b q(1-q^x)(1-\b^{-1}q^{x-N}).
\end{align*}
The corresponding $q$-difference equation for $\tilde{P}_n(x)$ is
\begin{multline*}
q^{-n}(1-q^n)(1-\a\b q^{n+1})\tilde{P}_n(x)=B(x)\sqrt{\frac{\omega(x)}{\omega(x+1)}}\tilde{P}_n(x+1)\\
-(B(x)+D(x))\tilde{P}_n(x)+D(x)\sqrt{\frac{\omega(x)}{\omega(x-1)}}\tilde{P}_n(x-1).
\end{multline*}

Let us denote by $\tilde{\mathfrak D}$ the operator given by the right hand side. We are interested in the projection on the eigenvectors corresponding to $n\in\{0,\dots, N-1\}.$ Thus, the desired projection $\mathcal P$ is the spectral projection associated with the following segment:

$$[ q^{-(N-1)}(1-q^{N-1})(1- \a\b q^{N}), 0 ].$$

Since eventually we want to get the Airy kernel, which corresponds to the spectral projection onto the positive part of the spectrum of the Airy operator, instead of $\tilde {\mathfrak D}$ we should consider
$$\mathfrak D=\tilde {\mathfrak D}- q^{-(N-1)}(1-q^{N-1})(1- \a\b q^{N}) \mathfrak I,$$
where $\mathfrak I$ is the identity operator.

Let $g(u)$ be a smooth function on $\mathbb R.$ Assign to it a function $f(x)$ on $\mathfrak X$ by setting $f(x)=g(u),$ where $x$ and $u$ are related by $x=c_1 N+c_2 N^{1/3}u.$

Then we get 
$$ f(x\pm 1)=g(u\pm c_{2}^{-1} N^{-\frac{1}{3}})\approx g(u)\pm c_{2}^{-1} N^{-\frac{1}{3}}+\frac{1}{2}c_2^{-2} N^{-\frac{2}{3}}. $$

We can find two values of $c_1$ (these corresponds to the rightmost and the
leftmost particle) and $c_2$ such that $\mathfrak D\approx c
N^{-\frac{2}{3}}(\ddot{g}(u)-ug(u)),$ where $c$ is a constant. More
precisely, $$c_1=\frac{\log(p)}{\log(q_0)}$$
and $p$ is equal to
\begingroup\makeatletter\def\f@size{9}\check@mathfonts
$$\frac{(A^2B^2q_0+A^2Bq_o+A^2B+AB)K^3+(-2A^2Bq_0-2ABq_0-2AB-2A)K^2+(ABq_0+Aq_0+A+1)K\pm2q_0\sqrt{D}}{K(4A^2Bq_0K^2 +(A^2B^2p^2-2A^2Bq_0^2-2A^2Bq_0+A^2q_0^2-2A^2q_0-2ABq_0+A^2-2Aq_0-2A+1)K +4Aq_0)},$$
\endgroup
where 
\begin{multline*}
D=K(K^2-Kq_0-K+q_0)(1-BK)A\\
\cdot(A^3B^2K^3q_0-A^2B^2K^2q_0-A^2BK^2q_0-A^2BK^2+ABKq_0+ABK+AK-1)
\end{multline*}
and $A=q_0^a, \text{ } B=q_0^b, \text{ } K=q_0^{k_0}.$
The constant $c_2$ can be expressed through a unique real solution of
a cubic equation. We will not present here the formula for $c_2$ since it is quite
lengthy. 

In the limit the eigenvalue equation $\mathfrak D\psi=\lambda \psi$ 
turns into the equation
$\mathfrak D^{\text{Airy}}=s\psi$
 after the renormalization $\lambda=c N^{-\frac{2}{3}}s.$
 
Next, we present some results of our computations. First, we plot the
density function
$$(D_k(c_1N+c_2uN^{\frac{1}{3}}+1)-D_k(c_1N+c_2uN^{\frac{1}{3}}))\cdot
c_2N^\frac{1}{3}$$
for the same values of the parameters but different values of $N.$ The
blue graph is the density function of the Tracy-Widom distribution.\footnote{To
  plot the density of the Tracy-Widom distribution we used a table of
  its values
 \url{ http://www.wisdom.weizmann.ac.il/~nadler/Wishart_Ratio_Trace/TW_ratio.html}
based on \cite{Bo}.}

 \begin{figure}[H]
\includegraphics[width=0.33\linewidth]{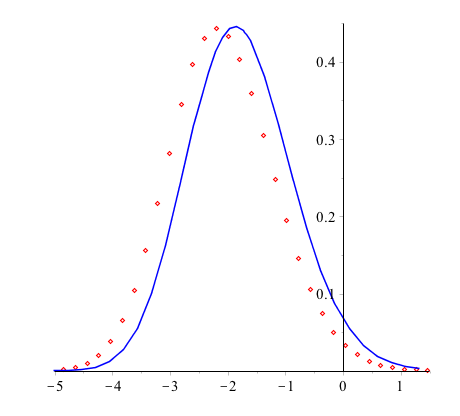}
\caption{Plot of the density function of the distribution of the
  rightmost particle with parameters $N=2000,$ $q=0.99,$ $k_0=0.2,$ $a=-1.1,$ $b=-1.3,$ $c_1=0.84839,$ $c_2=0.38999.$}
  \label{fig:501}
\end{figure}  

\begin{figure}[H]
\includegraphics[width=0.40\linewidth]{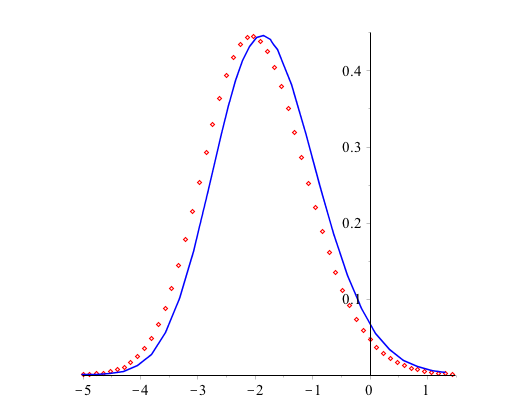}
\caption{Plot of the density function of the distribution of the
  rightmost particle with parameters $N=10000,$ $q=0.99,$ $k_0=0.2,$
  $a=-1.1,$ $b=-1.3,$ $c_1=0.84839,$ $c_2=0.38999.$}
  \label{fig:501}
\end{figure} 

We also plot $\mathbb E(\textup{TW})-\mathbb E(\textup{qH}_N)$  in logarithmic coordinates, where $\textup{TW}$ is the Tracy-Widom distribution
and $\textup{qH}_N$ is the distribution of the rightmost particle for
the $q$-Hahn ensemble. The following
pictures illustrate convergence of this graph to a line with 
slope $-\frac{1}{3}$, which agrees with our expectations \cite{F}. 
\begin{figure}[H]
\includegraphics[width=0.40\linewidth]{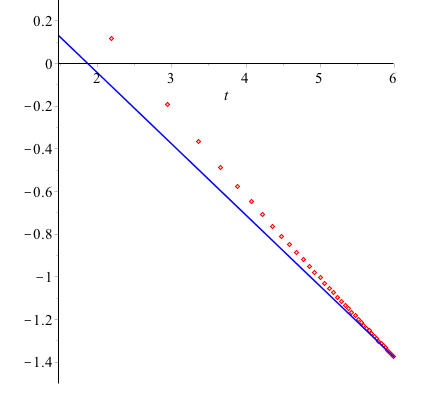}
\caption{$N=10, 20,
\dots, 400;$ $q_0=0.99,$ $k_0=0.3,$
$a=-1.1,$ $b=-1.3,$ $c_1=0.84839,$ $c_2=0.38999.$  The blue line has slope $-\frac{1}{3}.$}
  \label{fig:990}
\end{figure}  

\begin{figure}[H]
\includegraphics[width=0.40\linewidth]{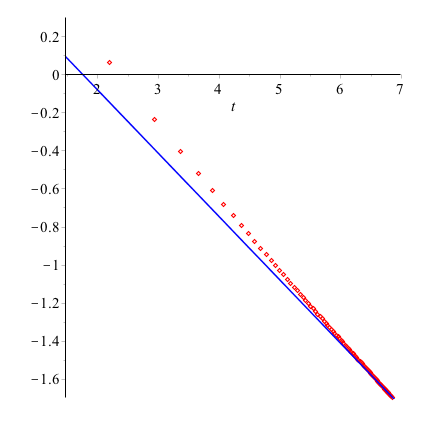}
\caption{$N=10, 20,
\dots, 1000;$ $q_0=0.5,$ $k_0=0.2,$
$a=-1.1,$ $b=-1.3,$ $c_1=0.69758,$ $c_2=0.47101.$ }
  \label{fig:501}
\end{figure}

\end{document}